\documentclass[sigplan,10pt,nonacm,screen]{acmart}
\usepackage{amsmath,mathtools}
\usepackage{booktabs}
\usepackage{tabularx}
\usepackage{enumitem}
\usepackage{algorithm}
\usepackage[noend]{algpseudocode}
\usepackage{xspace}
\newcommand{\alg}{\textsc{SeniorLock}\xspace}
\newcommand{\objalg}{\textsc{SeniorObj}\xspace}
\newcommand{\Next}{\mathit{Next}}
\newcommand{\Cand}{\mathit{Cand}}
\newcommand{\Owner}{\mathit{Owner}}

\newcommand{\INF}{\mathsf{INF}}
\newcommand{\ticket}{\mathit{ticket}}
\newcommand{\pc}{\kappa}
\newcommand{\nullowner}{\bot}

\newtheorem{definition}{Definition}
\newtheorem{lemma}{Lemma}
\newtheorem{theorem}{Theorem}
\newtheorem{corollary}{Corollary}
\newtheorem{proposition}{Proposition}
\theoremstyle{remark}

\begin{document}

\title{Wait-Free Locks Should Not Fear Later Arrivals}

\author{Tong Che}
\affiliation{%
  \institution{NVIDIA Research}
  \country{USA}
}
\email{tongc@nvidia.com}

\begin{abstract}
Helping seems to make a lock wait-free: wrap the critical section in an
idempotent thunk that any process can finish once the holder stalls.
Yet helping protects the system, not the call.
An overwritable candidate lets later requests bump one another in
sequence, so a call can be forced to help newcomer after newcomer and
never return, while point contention never exceeds two.

We ask whether a call can instead be charged only for the requests
active when it takes its ticket, never for what arrives afterward.
We show that the answer is yes.
\alg is a deterministic helpable thunk lock in which a call with
ticket-time seniority $\beta$ finishes in $O((\beta+1)(T+1))$ local
shared-memory steps, where $T$ bounds one thunk's cost, independently
of later invocations.
We call this guarantee \emph{retrospective wait-freedom}.

The same lock doubles as a universal construction we call \objalg:
it turns any deterministic sequential object whose operations are
bounded, concurrently idempotent thunks into a retrospective
wait-free one, with no copying of its representation and, when no
senior is active, at essentially the native cost of the operation.
\end{abstract}

\ccsdesc[500]{Theory of computation~Concurrent algorithms}
\ccsdesc[300]{Theory of computation~Shared memory algorithms}
\keywords{wait-freedom, thunk locks, minimum update, helping, retrospective wait-freedom, universal constructions, adaptivity, idempotence, space lower bounds}

\maketitle
\pagestyle{plain}

\section{Introduction}

Helping seems to turn a lock into a wait-free object.
Package the critical section as an idempotent thunk; if its holder stalls,
another process finishes the thunk and releases the lock
\cite{TurekShashaPrakash1992,Barnes1993,CensorHillelPetrankTimnat2015,
BenDavidBlellochWei2022}.

The intuition confuses global progress with the progress of one call.
The cited deterministic thunk locks are lock-free, while randomized
wait-free alternatives give probabilistic contention-dependent bounds
\cite{BenDavidBlelloch2022,AeiniEtAl2026}.
In a single-candidate deterministic design, a newly arrived request can
redirect the victim's every step.
Only the victim and the current redirecting call need be active, so point
contention stays at two while the victim takes infinitely many steps without
returning (Section~\ref{sec:design}).

The gap is temporal.
The calls already active at a chosen step are finite; the stream arriving
afterward need not be.
Can a deterministic helpable thunk lock charge a call only for the former?

We present \alg.
Each call takes an increasing ticket and publishes itself with an atomic
minimum update, so that no larger-ticket call can later replace its
candidate.
Its \emph{seniors} are the smaller-ticket calls already active when its
ticket is assigned; a later arrival always receives a larger ticket, but a
delayed earlier invocation may receive one too.
Repeated publication restores this guard after a senior completes, and a
versioned owner compresses stale larger-ticket installations.

We call the resulting guarantee \emph{retrospective wait-freedom}: an
operation's worst-case bound may depend on its seniors when its key is
fixed, but not on later invocations.
To our knowledge, \alg is the first deterministic helpable thunk lock in this
model with such a bound.
If $\beta_D$ is the ticket-time seniority of $D$, then
$\mathsf{Lock}(D)$ uses $O(\beta_D+1)$ management steps and
$O((\beta_D+1)(T+1))$ total local steps under every asynchronous schedule.
Since $\beta_D\le\pc_D-1$, the result is also point-contention adaptive.

\alg is also universal.
One instance linearizably turns any deterministic sequential object whose
private representation is accessed by thunks satisfying the uniform
$T$-step service contract into \objalg, a retrospective wait-free object
that never copies that representation into one atomic state transition
(Section~\ref{sec:universal}).
Prior generalized compare-and-swap work obtains the fixed-senior bound for
atomic state transitions \cite{HadzilacosThiessenToueg2026}.
\alg extends it to bounded, multi-step, effectful thunks.

The bound relies on resettable min-CAS as one base-object step.
For fixed $N$, implementing this primitive from unbounded
read-write-conditional registers has tight $\Theta(N)$ black-box location
complexity.
With infinitely many arrivals, no read/write/CAS implementation has constant
quiescent location complexity (Section~\ref{sec:separation}).

\paragraph{Contributions.}
\begin{enumerate}[leftmargin=2em]
  \item We formulate \emph{retrospective wait-freedom}, which freezes an
  operation's contention-dependent bound when its key is fixed, and present
  \alg, a deterministic helpable thunk lock with
  $O((\beta+1)(T+1))$ local-step complexity
  (Sections~\ref{sec:model}--\ref{sec:algorithm}).

  \item We prove a finite charge-set service theorem that applies beyond FAI
  ticket order, and give counterexamples showing why one-time publication and
  unversioned ownership lose the bound
  (Sections~\ref{sec:progress} and~\ref{sec:ablation}).

  \item The same lock is also universal.
  We derive \objalg, an \emph{effectful retrospective universal
  construction} that turns any deterministic sequential object whose
  operations satisfy the uniform concurrent-idempotence service contract
  into a retrospective wait-free one, with no representation-copying step
  (Section~\ref{sec:universal}).

  \item Resettable min-CAS is not a free primitive.
  For a fixed population of $N$ processes, we prove a tight
  $\Theta(N)$ location bound for any long-lived black-box implementation
  from read-write-conditional registers, even with unbounded values, and
  that no read/write/CAS implementation keeps constant quiescent location
  complexity once arrivals are unbounded
  (Section~\ref{sec:separation}).
\end{enumerate}

\section{Model and lock semantics}\label{sec:model}

\paragraph{Shared-memory model.}
There are $N$ deterministic asynchronous processes, each with at most one
outstanding $\mathsf{Lock}$ call, communicating through linearizable atomic
objects \cite{HerlihyWing1990,AttiyaWelch2004}.
One base-object access is one step, and the scheduler is unrestricted
\cite{Herlihy1991}.
The main bound is independent of $N$; Section~\ref{sec:separation} fixes $N$
for its register-space classification.

\paragraph{Descriptors and idempotent thunks.}
Every lock call allocates a fresh descriptor
\[
  D=(D.\mathit{id},D.\ticket,D.\mathit{done},
     D.\mathit{thunk},D.\mathit{result}).
\]
The identity and ticket are immutable and globally unique.
$D.\mathit{done}$ changes monotonically from false to true; reads and the one
false-to-true update are linearizable atomic accesses.
A \emph{run} of a thunk is the sequence of shared-data steps taken by one
process while executing or helping it.
For an execution $E$, let $E|D$ be the projection to steps belonging to runs
of $D$'s thunk.
We require the following observational form of concurrent idempotence from
Ben-David, Blelloch, and Wei \cite{BenDavidBlellochWei2022}.

\begin{definition}[Concurrent idempotence]\label{def:idempotence}
For every valid execution $E$ containing arbitrarily interleaved runs of one
thunk together with arbitrary other shared-data steps, there is a subsequence
$E'$ of $E|D$ such that replacing $E|D$ by $E'$ leaves a valid history
consistent with one run, preserves all non-thunk steps and responses, and is
indistinguishable under every continuation.
If a run finishes, the last step of the first finished run is the end of
$E'$.
\end{definition}

Thus the physical runs contain one logical copy whose \emph{logical execution
interval} runs from the first thunk step to the end of the first finished
run; subsequent steps are noneffectual.

Every $D.\mathsf{run}()$ call, independently of helper count, takes at most
$T$ local shared-memory steps to observe done or finish a run and set done;
it does not return while $D$ remains active.
If the thunk returns a value, its idempotence log records that value in the
descriptor before publishing done, and \textsc{Lock} returns the recorded
value.
The bound $T$ includes the descriptor-local logging needed to make the thunk
idempotent, but excludes the lock-management operations shown in
Algorithm~\ref{alg:seniorlock}.

\begin{definition}[Helpable thunk lock]
A call $\mathsf{Lock}(D)$ returns only after $D.\mathit{done}$ is true.
The logical execution interval of $D$ lies between the invocation and
response of $\mathsf{Lock}(D)$.
Logical execution intervals of distinct descriptors do not overlap.
\end{definition}

We abbreviate $\mathsf{Lock}(D,f)$ to $\mathsf{Lock}(D)$.
The interface serializes logical thunk intervals rather than physical
acquire/release ownership; late physical runs are noneffectual.

\paragraph{Contention and complexity.}
A lock call is \emph{active} from invocation through response; we use the
standard point-contention measure \cite{AttiyaFouren2003}.
For $\mathsf{Lock}(D)$, its maximum point contention is
\[
  \pc_D=\max_s
  \left|\{E:\mathsf{Lock}(E)\text{ is active at configuration }s\}\right|,
\]
where the maximum ranges over configurations during $\mathsf{Lock}(D)$.

\paragraph{Retrospective wait-freedom.}
Point contention measures overlap at a configuration.
The following object-independent property instead fixes the operations that
may enter a contention-dependent bound at a designated step.

\begin{definition}[Retrospective wait-freedom]\label{def:seniority}
Consider any concurrent object whose operations each receive a unique
\emph{key} from a total order, fixed no later than some designated step
$\sigma_{op}$ during the operation's execution.
The smaller-key operations active immediately after $\sigma_{op}$ are
$op$'s \emph{seniors}; its \emph{seniority} is the size of that set.
An implementation is \emph{retrospective wait-free} if there is a function
$f$ such that every operation completes within $O(f(\text{seniority}))$
steps of its invoking process under every asynchronous schedule, independent
of operations invoked after $\sigma_{op}$.
\end{definition}

An operation's seniors are a subset of the operations active at
$\sigma_{op}$, so retrospective wait-freedom implies point-contention
adaptivity and additionally makes the bound independent of invocations after
$\sigma_{op}$.

We instantiate Definition~\ref{def:seniority} for \alg's $\mathsf{Lock}$
operation, using the FAI ticket as the key.

\paragraph{\alg seniority.}
For the call $\mathsf{Lock}(D)$, let $\sigma_D=\tau_D$ be the linearization
point of $D$'s FAI and let $t_D$ be the returned ticket.
Define
\[
  \begin{aligned}
  \mathsf{Seniors}(D)=\{E:\;&E\text{ is active immediately after }\tau_D,\\
                            &\tau_E<\tau_D\}.
  \end{aligned}
\]
$D$'s \emph{seniority} is
$\beta_D=|\mathsf{Seniors}(D)|$.

Here $\tau_E$ must exist: an invoked call still paused before its FAI is not
a senior.
Because FAI tickets increase, $\tau_E<\tau_D$ is equivalent to
$t_E<t_D$ for ticketed calls.
Every later arrival is therefore a larger-ticket call; the converse may fail
for a call invoked before $\tau_D$ but delayed before its FAI.
All calls represented in $\mathsf{Seniors}(D)$ overlap
$\mathsf{Lock}(D)$ at $\tau_D$, so
$\beta_D\le\pc_D-1$.
The inequality can be strict by an arbitrary amount: after $\tau_D$, any
number of larger-ticket calls may become active before $D$'s first minimum
update.
Our main bound depends on $\beta_D$, not on that later peak.

This differs from \emph{interval contention}, which counts every call that
overlaps $\mathsf{Lock}(D)$ at any time.
Interval contention can be unbounded while point contention is two: one
later call may finish before the next begins.
All complexity bounds count shared-memory steps by the process invoking
$\mathsf{Lock}(D)$.
A \emph{management step} is any shared-memory access shown in
Algorithm~\ref{alg:seniorlock} outside a thunk run.
The bounds do not cover aggregate system work or wall-clock time under an
asynchronous scheduler.

\paragraph{Base objects.}
The lock contains three abstract objects.
\begin{itemize}[leftmargin=1.5em]
  \item $\Next$ is an unbounded fetch-and-increment (FAI) counter.
  \item $\Cand$ is a resettable min-CAS object.
  It stores either $\INF$ or a descriptor token, ordered by the descriptor's
  immutable ticket, and supports atomic
  $\mathsf{Read}$, $\mathsf{CAS}$, and
  $\mathsf{MinUpdate}(x):C\gets\min(C,x)$.
  The initial value is $\INF$.
  \item $\Owner$ is a CAS object containing $(p,g)$, where $p$ is a
  descriptor pointer or $\nullowner$ and $g$ is an unbounded generation.
  Initially $\Owner=(\nullowner,0)$.
\end{itemize}
Tickets are unique, so they totally order descriptors.
The same candidate object also supports exact-token CAS reset, so its
minimum-update monotonicity holds between successful resets.

The theorem assumes unbounded tickets and generations and fresh descriptors;
its three-object count is lock-resident only
(Section~\ref{sec:scope}).

\section{From replacement to a ticket guard}\label{sec:design}

\paragraph{Destructive replacement.}
The single-overwrite candidate pattern admits unbounded cumulative bypass at
constant point contention.
Consider a single overwrite slot and a victim $D$.
In round $i$, let $D$ publish and pause; start one later call $Y_i$, overwrite
$D$, and let $D$ help $Y_i$ finish.
Return $Y_i$ before starting $Y_{i+1}$.
The victim takes steps in every round and never completes, but at every
configuration at most $D$ and one $Y_i$ are active.
The complexity driver is cumulative bypass, not point contention.

\paragraph{A stranded descriptor-side obligation.}
Adding only a descriptor-local selected bit to this pattern still permits a
helper to mark a descriptor after the candidate has changed.
That mark is persistent only in the descriptor, not in the arbitration state.
For example, let a smaller-ticket request $E$ replace $D$ after a helper reads
$\Cand=D$ but before it marks $D$.
After $E$ finishes and clears the candidate, the stale helper can mark $D$ as
selected.
Now $D$ is selected but absent from $\Cand$.
Sequential larger-ticket calls can repeatedly install themselves immediately
before $D$'s owner CAS, and $D$ can be forced to help each of them.
Again the schedule has point contention two.
Thus the local bit does not impose a shared order; a pending request needs a
\emph{shared ordering constraint}.

\paragraph{An unversioned installation cohort.}
Even with a monotone candidate, selection and owner installation are separate
atomic steps.
Suppose $r$ helpers read the same candidate $x$ and an empty, untagged owner,
then all pause before CAS.
After the first helper installs and completes $x$, every delayed CAS can
successfully install the now-done $x$ again whenever the owner returns to
null.
All $r$ calls remain simultaneously active, so this does not violate a point
contention bound of $r+1$.
Repeating such a cohort after several smaller-ticket completions gives a
quadratic number of victim-visible tenures; Section~\ref{sec:ablation}
formalizes the schedule.

\paragraph{The ticket guard.}
Minimum update supplies the shared ordering constraint.
Suppose $D$ has ticket $t_D$ and performs
$\Cand.\mathsf{MinUpdate}(D)$.
Immediately afterward,
\[
  \Cand\ne\INF
  \quad\text{and}\quad
  \Cand.\ticket\le t_D. \tag{1}\label{eq:guard}
\]
No later request can falsify~\eqref{eq:guard}.
Only a reset of the exact current candidate can do so.
Our algorithm permits such a reset only after that candidate is done.
Any smaller-ticket candidate is either from a call active at $\tau_D$ or is
the one possible completed-token residue, which contributes one cleanup
charge.
Before~\eqref{eq:guard} is established, several active helpers may have read a
larger-ticket candidate and paused before owner installation.
They may resume after the guard.
If owner emptiness were represented by an untagged null pointer, all of those
stale installations could succeed one after another as the owner repeatedly
returned to null.
In \alg they all expect the same generation.
At most one succeeds; when that tenure ends, the generation changes and the
rest fail.
No new larger-ticket installation can be prepared while the guard holds.
Repeated self-publication restores the guard after a legitimate smaller-ticket
completion, while direct owner installation uses the successful CAS itself
as the unique selection event.

The mechanism accepts any total order for which the call has a finite charge
set.
With increasing FAI tickets, the senior set itself is a valid charge
set (Section~\ref{sec:progress}).

\section{The \alg algorithm}\label{sec:algorithm}

Algorithm~\ref{alg:seniorlock} gives the construction.
The requester initializes its descriptor, obtains a ticket, and repeatedly
publishes itself before helping.
The publication is unconditional while the descriptor is not done.
A delayed smaller-ticket request can replace the current candidate token, so
every still-active requester republishes to restore its ticket guard after
that request completes.

\begin{algorithm}[t]
\caption{\alg for one helpable thunk lock.}\label{alg:seniorlock}
\begin{algorithmic}[1]
\Procedure{Lock}{$D,f$}
  \State $D.\mathit{thunk}\gets f$;\quad
         $D.\mathit{done}\gets\mathsf{false}$;\quad
         $D.\mathit{result}\gets\bot$
  \State $D.\ticket\gets\Next.\mathsf{FAI}()$
  \While{$\neg D.\mathit{done}$}
    \State $\Cand.\mathsf{MinUpdate}(D)$
    \State \Call{Assist}{}
  \EndWhile
  \State \Return $D.\mathit{result}$
\EndProcedure
\Statex
\Procedure{Assist}{}
  \State $(w,g)\gets\Owner.\mathsf{Read}()$
  \If{$w\ne\nullowner$}
    \State \Call{HelpOwner}{$w,g$}
    \State \Return
  \EndIf
  \State $x\gets\Cand.\mathsf{Read}()$
  \If{$x=\INF$}
    \State \Return
  \EndIf
  \If{$x.\mathit{done}$}
    \State $\Cand.\mathsf{CAS}(x,\INF)$
    \State \Return
  \EndIf
  \If{$\Owner.\mathsf{CAS}((\nullowner,g),(x,g))$}
    \State \Call{HelpOwner}{$x,g$}
  \EndIf
\EndProcedure
\Statex
\Procedure{HelpOwner}{$x,g$}
  \If{$\Owner.\mathsf{Read}()\ne(x,g)$}
    \State \Return
  \EndIf
  \If{$\neg x.\mathit{done}$}
    \State $x.\mathsf{run}()$
  \EndIf
  \If{$x.\mathit{done}$}
    \State $\Cand.\mathsf{CAS}(x,\INF)$
    \State $\Owner.\mathsf{CAS}((x,g),(\nullowner,g+1))$
  \EndIf
\EndProcedure
\end{algorithmic}
\end{algorithm}

\subsection{Critical interleavings}\label{sec:races}

\paragraph{The done test and generation form a handshake.}
The required order in \textsc{Assist} is
\[
  \text{read empty owner }(\nullowner,g)
  \ \prec\ \text{read }x.\mathit{done}
  \ \prec\ \text{owner CAS}.
\]
Assume the middle read returns false but $x$ becomes done before the CAS.
The only way to make $x$ done is through a tenure that must be released before
the owner can again be empty.
That release writes generation at least $g+1$, so the delayed CAS expecting
$g$ fails.
If the done test were moved before the owner read, a helper could read
false, pause while another helper completes $x$, then read the \emph{new}
empty generation and install the completed descriptor.
This owner-read/done-read/CAS order is required for correctness.

\paragraph{Other stale observations.}
Three further schedules -- a stale candidate becoming owner, a late
physical run resuming after revalidation, and a new candidate published just
after a reset -- are handled by the same combination of exact-token
cleanup and concurrent idempotence.
See Appendix~\ref{app:safety} (``Critical interleavings'') for all four
schedules.

\section{Safety}\label{sec:safety}

An \emph{owner tenure} $(x,g)$ begins when a CAS changes
$(\nullowner,g)$ to $(x,g)$ and ends when a CAS changes $(x,g)$ to
$(\nullowner,g+1)$.
Table~\ref{tab:authorization} summarizes the state changes that require
proof obligations.

\begin{table}[t]
\caption{Authorization for destructive state changes.}
\label{tab:authorization}
\begin{tabularx}{\columnwidth}{@{}lX@{}}
\toprule
Change & Required evidence \\
\midrule
$\Owner:(\nullowner,g)\!\to\!(x,g)$
  & Empty owner was read in generation $g$; afterward $x$ was read not done. \\
$\Cand:x\!\to\!\INF$
  & $x$ was observed done; the CAS matches the exact descriptor token. \\
$\Owner:(x,g)\!\to\!(\nullowner,g+1)$
  & $x$ was observed done while the same owner tenure was still installed. \\
\bottomrule
\end{tabularx}
\end{table}

\begin{lemma}[Generation uniqueness]\label{lem:generation}
For every generation $g$, at most one descriptor is installed from
$(\nullowner,g)$.
After generation $g$ ends, every delayed installation expecting
$(\nullowner,g)$ fails.
\end{lemma}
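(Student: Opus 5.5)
The plan is an invariant over the history of $\Owner$, proved by induction on the sequence of successful CAS operations on it. The first step is to list every write. In Algorithm~\ref{alg:seniorlock}, $\Owner$ is modified only by two CAS forms:
\begin{itemize}
\item the installation $\Owner.\mathsf{CAS}((\nullowner,g),(x,g))$ in \textsc{Assist};
\item the release $\Owner.\mathsf{CAS}((x,g),(\nullowner,g+1))$ in \textsc{HelpOwner}.
\end{itemize}
Both are atomic operations on a linearizable object. The initial value is $(\nullowner,0)$.

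The invariant, proved by induction on the number of successful CASes, has three parts:
\begin{enumerate}
\item the generation component never decreases;
\item a successful installation preserves the generation;
\item a successful release increments it by exactly one.
\end{enumerate}
It follows that the sequence of values taken by $\Owner$ alternates:
\[
(\nullowner,0),(x_0,0),(\nullowner,1),(x_1,1),\dots
\]
In particular, each value $(\nullowner,g)$ occurs during at most one contiguous interval.

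The argument for this is as follows. The value $(\nullowner,g)$ can be left only in two ways:
\begin{itemize}
\item by a successful installation, which yields $(x,g)$ with $x\ne\nullowner$;
\item by a successful release expecting $(\nullowner,g)$, which cannot happen, because a release's expected pointer is a descriptor $x$ that \textsc{HelpOwner} has read as owner, so it is non-null.
\end{itemize}
The value $(x,g)$ can be left only by the release, which gives $(\nullowner,g+1)$. An installation cannot leave it, since installations expect a null pointer. Hence generations strictly increase across null states, and $(\nullowner,g)$ is never re-entered after it is left.

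Given this, the first claim follows. Any two successful installations from $(\nullowner,g)$ would both have to act on the unique interval in which $\Owner=(\nullowner,g)$. The first success changes the value to $(x,g)$, and the only value reachable from there is $(\nullowner,g+1)$. So $(\nullowner,g)$ never reappears, and any second installation's expected value mismatches. For the second claim, once generation $g$ ends (the release to $(\nullowner,g+1)$), the generation component is at least $g+1$ forever by monotonicity. Every later CAS expecting $(\nullowner,g)$ therefore fails, whenever the process read $g$.

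The main obstacle is being careful that the release CAS really expects a non-null pointer and exactly the installed tenure. I will argue this from the guard in \textsc{HelpOwner}: it is called either with $(w,g)$ read with $w\ne\nullowner$, or with $(x,g)$ immediately after a successful installation of $x\ne\INF$, and $x$ is a descriptor. A short remark will handle the case where a stale release fails. Such a release expects $(x,g)$, and that value, by the alternation, never recurs after its tenure ends, so it is harmless. No reasoning about $\Cand$ or $\mathit{done}$ is needed for this lemma; only the CAS semantics and unbounded generations (no wraparound) are used.
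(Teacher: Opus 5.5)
Your proof is correct and takes the same approach as the paper's: CAS on the complete owner pair, plus the fact that the only transition back to an empty owner writes $(\nullowner,g+1)$, so generations are monotone and $(\nullowner,g)$ is never re-entered. The only difference is that you explicitly justify why no release CAS can expect a null pointer, which the paper's one-line ``only one CAS can change $(\nullowner,g)$'' leaves implicit.
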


\begin{lemma}[Release after completion]\label{lem:release}
If tenure $(x,g)$ ends, then $x.\mathit{done}$ was true before the releasing
CAS.
\end{lemma}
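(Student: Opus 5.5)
The plan is to identify every instruction in Algorithm~\ref{alg:seniorlock} that can end a tenure and check that a done read precedes it. First we pin down what ends a tenure. The owner field leaves a value $(x,g)$ with $x\ne\nullowner$ only through a successful CAS whose expected value is $(x,g)$. Inspecting the code, there are exactly two owner CASes. The installation CAS in \textsc{Assist} expects $(\nullowner,g)$, so it can never move the owner away from a non-null value. The release CAS $\Owner.\mathsf{CAS}((x,g),(\nullowner,g+1))$ in \textsc{HelpOwner} is the only step that ends tenure $(x,g)$. So it suffices to show that whenever this CAS succeeds, $x.\mathit{done}$ was already true before it.

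Next we trace the guard on that CAS. It lies inside the branch $\textbf{if } x.\mathit{done}$, so the executing helper read $x.\mathit{done}=\mathsf{true}$ at some step $s$ strictly before the releasing CAS. The done field is a linearizable atomic object that changes only from false to true (Section~\ref{sec:model}). Hence once the read at $s$ returns true, the field remains true at every later configuration, in particular immediately before the CAS. This is the claim.

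The one point needing care is that the helper's parameters $(x,g)$ in the CAS are the same as the tenure actually ended. This holds because a CAS on $\Owner$ succeeds only if the current value equals its expected value $(x,g)$ exactly. So the descriptor whose done bit was read is literally the descriptor whose tenure ends; no stale-pointer mismatch arises.

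The main obstacle is the scope of the step count, and it is mild. One might worry about a stale helper reaching the release CAS in a different tenure with the same $x$ but a different generation. That case changes nothing: the statement only concerns the successful CAS ending $(x,g)$, and it is always preceded by the helper's own done read. Lemma~\ref{lem:generation} is not needed here, but it guarantees that the tenure $(x,g)$ is well defined, since only one installation per generation succeeds.
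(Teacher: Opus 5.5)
Your proposal is correct and matches the paper's argument. The release CAS in \textsc{HelpOwner} is the only transition that ends a tenure, it executes only inside the $x.\mathit{done}$ test, and the done flag is monotone. Your extra observation that the exact-match CAS ties the tested descriptor to the ended tenure is a clarification the paper leaves implicit.
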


\begin{lemma}[Candidate responsibility]\label{lem:candidate}
If $\Cand=x$ and $x$ is not done, no process can reset $\Cand$ from $x$ to
$\INF$.
A minimum update can replace $x$ only by a smaller-key descriptor.
\end{lemma}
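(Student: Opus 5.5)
The plan is a case analysis over every instruction in Algorithm~\ref{alg:seniorlock} that can modify $\Cand$. There are exactly two kinds: the exact-token reset $\Cand.\mathsf{CAS}(x,\INF)$, which appears once in \textsc{Assist} and once in \textsc{HelpOwner}, and $\Cand.\mathsf{MinUpdate}(D)$ in \textsc{Lock}. No other step writes $\Cand$. Since these objects are linearizable, we reason about the atomic step at which $\Cand$ would change from $x$. The goal is to show that a reset step in that configuration has $x$ already done, and that a minimum update which changes the value must lower the key.

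\emph{Resets.} A successful $\Cand.\mathsf{CAS}(x',\INF)$ changes $\Cand$ only if $\Cand=x'$ at that step. Tokens denote unique descriptors, so $x'=x$. In \textsc{Assist}, the CAS is executed only inside the branch where the issuing process read $x'.\mathit{done}=\mathsf{true}$. In \textsc{HelpOwner}, it is executed only inside the branch guarded by a read of $x'.\mathit{done}$ returning true. In both cases that read precedes the CAS.

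Because $D.\mathit{done}$ changes monotonically from false to true, $x$ is done at the CAS step, and remains done from then on. This contradicts the hypothesis that $x$ is not done. One subtlety needs a check. The initialization $D.\mathit{done}\gets\mathsf{false}$ in \textsc{Lock} happens before $D$ obtains its ticket and before any $\mathsf{MinUpdate}(D)$, so no token of $D$ is in $\Cand$ or visible to any helper at that point. Freshness of descriptors therefore rules out a done-then-undone reuse, and monotonicity applies to every token that can appear in $\Cand$.

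\emph{Minimum updates.} $\mathsf{MinUpdate}(y)$ sets $\Cand\gets\min(\Cand,y)$, with the order given by immutable tickets. If this changes the value away from $x$, then $\min(x,y)=y\neq x$. Tickets are unique FAI values, so $t_y<t_x$, and $y$ is a smaller-key descriptor. Immutability matters here: $y$ acquired its ticket in line 3 before its first $\mathsf{MinUpdate}$, so the comparison is well defined and stable.

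I expect the reset case to be the only step needing care. The obstacle is arguing that the \emph{exact-token} CAS cannot hit a different incarnation of the same value. This follows because tokens are fresh descriptor identities, not recycled pointers, as the model stipulates. The lemma then combines the two cases: while $x$ is undone, $\Cand$ can only move downward in ticket order.
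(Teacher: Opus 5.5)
Your proposal is correct and follows the paper's own argument. Both exact-token resets, in \textsc{Assist} and in \textsc{HelpOwner}, are reached only after a read of $x.\mathit{done}$ returns true, so monotonicity of done rules out clearing an undone candidate, and $\Cand.\mathsf{MinUpdate}$ can change the stored value only to a smaller-key descriptor. Your extra remarks on descriptor freshness and on the initialization write are harmless refinements that the paper leaves implicit.
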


\begin{lemma}[One tenure per descriptor]\label{lem:one-tenure}
Every descriptor is installed as owner at most once.
\end{lemma}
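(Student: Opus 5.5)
The plan is to identify every possible installation of a descriptor $x$ as owner and show that a second one cannot happen. Installations occur only through a successful $\Owner.\mathsf{CAS}((\nullowner,g),(x,g))$ in \textsc{Assist}. By the order of steps in \textsc{Assist}, such a CAS is preceded by three steps of the same helper, in this order: a read of $\Owner=(\nullowner,g)$, a read of $\Cand=x$, and a read of $x.\mathit{done}=\mathsf{false}$. I will argue by contradiction. Suppose $x$ is installed twice, in tenures $(x,g_1)$ and $(x,g_2)$ with $g_1<g_2$. By Lemma~\ref{lem:generation}, each generation installs at most one descriptor, so these are two distinct generations.

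First I will establish that generations increase monotonically. $\Owner$ is changed only by installations, which keep $g$, and by releases $(x,g)\to(\nullowner,g+1)$. Hence the owner passes through $(\nullowner,g_1+1)$ before it can be $(\nullowner,g_2)$. The release of tenure $(x,g_1)$ is the only way out of $(x,g_1)$. It therefore occurs before the second installation's CAS succeeds, and in fact before that helper's read of $(\nullowner,g_2)$, because that value does not exist before the release. By Lemma~\ref{lem:release}, $x.\mathit{done}$ was true before the releasing CAS.

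Second, I will use monotonicity of $\mathit{done}$. The helper performing the second installation read $x.\mathit{done}=\mathsf{false}$ after it read $(\nullowner,g_2)$. That read comes after the release of $(x,g_1)$, so $x.\mathit{done}$ was already true at that point. Since $\mathit{done}$ never reverts from true to false, this read must return true, which is a contradiction. This is exactly the owner-read/done-read/CAS handshake described in Section~\ref{sec:races}.

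The main obstacle is the first step. I must check that the read of $(\nullowner,g_2)$ really follows the release of tenure $g_1$, which means ruling out any other path to generation $g_2$, and I must handle $x$ being installed as a stale value carried between iterations. The stale case is handled by the fact that all three observations are made freshly within a single \textsc{Assist} invocation. I will therefore carefully enumerate every CAS on $\Owner$ in Algorithm~\ref{alg:seniorlock}: there are only the two, and both are preserved by the generation invariant. With that enumeration, the ordering follows by induction on the execution.
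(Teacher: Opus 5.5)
Your proof is correct and takes essentially the same route as the paper. Generation uniqueness and monotonicity force any second installation of $x$ to rest on an empty-owner read made after the release of $x$'s first tenure. The owner-read/done-read/CAS order in \textsc{Assist}, together with release-after-completion and the monotone done flag, then makes that helper's done read return true. The paper instead splits installations into those prepared before the release, which expect an obsolete generation and fail, and those prepared after it, which observe done. Your contradiction with $g_1<g_2$ folds the first case into generation uniqueness, which is only a cosmetic difference.
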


\paragraph{Safety invariants.}
CAS on the complete owner pair gives generation uniqueness; the release code
first observes the monotone done flag.
Candidate reset likewise follows a done observation and compares the exact
token.
After one tenure, pre-release installation snapshots expect the old
generation, while post-release snapshots observe done, proving one tenure.
See Appendix~\ref{app:safety} (``Safety invariants'') for the detailed
proofs.

\begin{theorem}[Safety]\label{thm:safety}
\alg implements a helpable thunk lock.
\end{theorem}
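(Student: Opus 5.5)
We check the three clauses of the helpable-thunk-lock definition in turn, using Lemmas~\ref{lem:generation}--\ref{lem:one-tenure} and concurrent idempotence (Definition~\ref{def:idempotence}).

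The first clause, that $\mathsf{Lock}(D)$ returns only after $D.\mathit{done}$ is true, is syntactic. The only exit from the \textbf{while} loop is a read of $D.\mathit{done}$ returning true. Since done is monotone and written only after the idempotence log records the thunk's result, the value returned from $D.\mathit{result}$ is the logged result of the logical run.

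The second clause places the logical interval of $D$ inside $\mathsf{Lock}(D)$. No thunk step of $D$ can occur before the invocation of $\mathsf{Lock}(D)$. Every run of $D$ happens inside \textsc{HelpOwner}$(D,g)$, after $D$ is installed as owner. Installation requires some helper to have read $\Cand=D$, and $D$ enters $\Cand$ only through $D$'s own $\mathsf{MinUpdate}$, which is issued after invocation. Hence the first thunk step follows the invocation. The logical interval ends with the first finished run, which sets done. $\mathsf{Lock}(D)$ cannot respond before observing done, so the end of the logical interval precedes the response.

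The third clause is mutual exclusion of logical intervals, and it is the main step. Take distinct descriptors $x$ and $y$. By Lemma~\ref{lem:one-tenure} each has at most one tenure, say $(x,g)$ and $(y,h)$. By Lemma~\ref{lem:generation} tenures are disjoint in time and ordered by generation; assume $g<h$. Every physical run of $x$ starts inside \textsc{HelpOwner}$(x,g)$ only after a read confirms $\Owner=(x,g)$, so the first thunk step of $x$ lies after the start of its tenure. By Lemma~\ref{lem:release}, the tenure of $x$ ends only after $x.\mathit{done}$ is true. That means some run of $x$ has already finished, so the logical interval of $x$ ends before the releasing CAS, and therefore before tenure $(y,h)$ begins. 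The logical interval of $y$ starts with a run begun after $y$'s owner check, so it comes strictly later.

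The obstacle is \emph{late physical runs}: a helper that passed the owner check for $x$ can keep executing $x$'s thunk during $y$'s tenure, and these steps physically interleave with $y$. Here we apply Definition~\ref{def:idempotence}. Once the first run of $x$ finishes, the idempotence guarantee lets us delete all later steps of $x$'s runs without changing non-thunk steps or responses, and the resulting history is indistinguishable. Applying this replacement to each descriptor in turn gives a history in which the remaining logical intervals are contained in their respective tenures, and tenures are disjoint. I would state this replacement step carefully, since the definition is per-thunk and must be applied iteratively. Each replacement preserves all other descriptors' steps, so the replacements compose. The stale-observation schedules of Appendix~\ref{app:safety} are covered by the same argument: stale installations fail by generation, and stale runs are noneffectual.
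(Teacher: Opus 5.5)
Your proposal is correct and follows essentially the paper's own argument. Both proofs use the same four facts: owner revalidation precedes every run; release after completion means a later tenure starts only after the earlier descriptor has a finished run; concurrent idempotence neutralizes late physical runs; and publication before installation, together with the done test before return, places each logical interval inside its call.
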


\begin{proof}[Proof sketch]
Every effectual run follows validation of one owner tenure.
A later tenure cannot begin before the prior descriptor is done; if a
validated helper resumes after release, concurrent idempotence makes that run
noneffectual.
Thus generation order serializes logical thunk intervals.
Publication and completion place each such interval within its call, and the
generation order respects real-time precedence.
\end{proof}

\section{Finite charge-set progress}\label{sec:progress}

The charging argument accepts any total order for which the operation has a
finite charge set.
It bounds a call by a finite \emph{charge set} containing every smaller-key
operation that can affect it.
With increasing FAI tickets, the senior set is a valid charge set,
yielding retrospective wait-freedom (Definition~\ref{def:seniority}).
Section~\ref{sec:separation} instead uses static keys, which retain a finite
wait-free bound but may charge later arrivals.
The progress argument uses only monotone completion and bounded service; it
is independent of the thunk semantics in Section~\ref{sec:model}.

\begin{definition}[Ordered service instance]\label{def:ordered-service}
An ordered service instance follows the control flow of
Algorithm~\ref{alg:seniorlock}, except that every operation $x$ receives an
arbitrary unique immutable key $\rho(x)$ before its first publication.
The done flag may be replaced by a monotone predicate $\mathsf{Done}(x)$ that
can first become true only through service of owner $x$.
After revalidating tenure $(x,g)$, a matching service call makes or observes
$\mathsf{Done}(x)$ within at most $T$ local shared-memory steps.
Every pending requester repeatedly applies
$\Cand.\mathsf{MinUpdate}(x)$ before calling \textsc{Assist}.
\end{definition}

Fix an execution $\alpha$ and a descriptor $D$ in it.
Let $\sigma_D$ be the point at which $D$'s key is fixed and write
$d=\rho(D)$.
The \emph{pending suffix} of $D$ is the part of $\alpha$ after $\sigma_D$ and
before $D$ becomes done, or the entire remaining execution if it never does.

\begin{definition}[Charge set]\label{def:charge-set}
A finite set $\mathcal S_D(\alpha)$ is a \emph{charge set} for $D$ in
$\alpha$ if every descriptor $x\ne D$ with $\rho(x)<d$ belongs to
$\mathcal S_D(\alpha)$ whenever, during $D$'s pending suffix, at least one of
the following holds:
\begin{enumerate}[label=(\roman*),leftmargin=2em]
  \item $x$ is not done at $\sigma_D$ and is named there by $\Cand$ or
        $\Owner$;
  \item the requester of $x$ publishes $x$ after $\sigma_D$;
  \item $x$ begins an owner tenure after $\sigma_D$.
\end{enumerate}
Set $m=|\mathcal S_D(\alpha)|$ and abbreviate the set by $\mathcal S_D$ when
$\alpha$ is fixed.
\end{definition}

A smaller-key done token already in $\Cand$ at $\sigma_D$ that is never
republished need not belong to $\mathcal S_D$: it can be removed once.
Likewise, one owner tenure already current at $\sigma_D$ is treated as an
initial residue.
These are the two constant terms in the charging bounds below.
A tenure is \emph{encountered} by $D$ when a step of $D$ reads its owner pair
or an owner transition in that tenure makes $D$'s CAS/revalidation fail.
For a CAS prepared from an empty-owner read, charge failure to the first
tenure that makes that expected pair unreachable and to the interval
containing the read, even if the delayed CAS executes in a later interval.

\begin{definition}[Guarded configuration]
A configuration is $D$-\emph{guarded} when
\[
  \Cand\ne\INF
  \quad\text{and}\quad
  \rho(\Cand)\le d.
\]
A maximal interval of guarded configurations is a $D$-guarded interval.
\end{definition}

\begin{lemma}[Guard establishment]\label{lem:establish}
\leavevmode\par
Every $\Cand.\mathsf{MinUpdate}(D)$ establishes the $D$-guard unless $D$ is
already done.
While the guard holds, no larger-key descriptor can become the value of
$\Cand$.
\end{lemma}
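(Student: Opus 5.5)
The plan is a case analysis on the single atomic step $\Cand.\mathsf{MinUpdate}(D)$, followed by a check of every operation that can write $\Cand$. We work in an ordered service instance (Definition~\ref{def:ordered-service}). There the only writes to $\Cand$ are of two kinds: minimum updates $\Cand.\mathsf{MinUpdate}(x)$ issued by pending requesters, and exact-token resets $\Cand.\mathsf{CAS}(x,\INF)$ issued from \textsc{Assist} or \textsc{HelpOwner}. We use the ordering of $\INF$ above every key, as stipulated for the resettable min-CAS object.

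\textbf{Establishment.} Let $c$ be the value of $\Cand$ immediately before the minimum update, and note that the update is atomic.
\begin{itemize}[leftmargin=1.5em]
  \item If $c=\INF$, or $c$ is a token with $\rho(c)>d$, the new value is $D$. Then $\Cand\ne\INF$ and $\rho(\Cand)=d$.
  \item If $c$ is a token with $\rho(c)\le d$, the value is unchanged. Since keys are unique, $\rho(c)<d$ unless $c=D$.
\end{itemize}
In both cases the configuration immediately after the step is $D$-guarded. The hypothesis ``unless $D$ is already done'' is used only to place us in $D$'s pending suffix, where guarded intervals are counted. The establishment argument itself does not need it.

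\textbf{Preservation.} Fix a $D$-guarded configuration with $\Cand=c$, so $c\ne\INF$ and $\rho(c)\le d$. Consider any step that changes $\Cand$ to a token $y$, and check each kind of write.
\begin{itemize}[leftmargin=1.5em]
  \item A minimum update $\Cand.\mathsf{MinUpdate}(y)$ succeeds in changing the value only if $\rho(y)<\rho(c)\le d$. Any descriptor it installs therefore has smaller key than $d$, and it also keeps the configuration guarded. This matches the second clause of Lemma~\ref{lem:candidate}.
  \item An exact-token CAS writes only $\INF$, never a descriptor, so it cannot install a larger-key descriptor.
\end{itemize}
Hence the first step that could ever make $\Cand$ hold a descriptor $y$ with $\rho(y)>d$ must start from a configuration with $\Cand=\INF$. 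That configuration is not guarded, so such a step can only occur after the guarded interval has ended. Induction over the steps of a maximal guarded interval then gives the second claim: every value $\Cand$ takes inside the interval is a token of key at most $d$.

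\textbf{Main obstacle.} The delicate point is the phrase ``while the guard holds''. The guard can end through a reset $\Cand.\mathsf{CAS}(c,\INF)$. Lemma~\ref{lem:candidate} says this happens only after $c$ is done, so it is not a violation of the lemma; it is exactly the termination of the guarded interval. I would state explicitly that the claim concerns transitions whose source configuration is guarded, and check that no single atomic step both resets $\Cand$ and installs a larger-key token. This holds because a CAS writes only $\INF$ and a minimum update never raises the stored key.
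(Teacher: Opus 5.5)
Your proof is correct and is essentially the paper's argument. A minimum update by $D$ leaves either $D$ or a smaller-key token in $\Cand$, larger-key minimum updates cannot change a guarded value, and the only CAS the algorithm issues writes $\INF$ rather than a descriptor; you simply spell out the case analysis and the step-by-step induction over the guarded interval that the paper's one-line appendix proof leaves implicit.
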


\begin{lemma}[Smaller-key chain compression]\label{lem:smaller-chain}
Smaller-key publications never destroy the guard or split a guarded interval.
Within one guarded interval, changes of $\Cand$ caused by first-time
smaller-key publications form a strictly decreasing key chain of length at
most $m$.
\end{lemma}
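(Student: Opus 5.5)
The proof enumerates every way $\Cand$ can change inside a fixed $D$-guarded interval and checks that each change either keeps the guard or is impossible. The value of $\Cand$ changes only through a successful $\mathsf{MinUpdate}$ or a successful exact-token $\mathsf{CAS}(x,\INF)$.

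\emph{First claim: smaller-key publications are harmless.} Take a configuration in a guarded interval, so $\Cand=y\ne\INF$ and $\rho(y)\le d$. Apply $\Cand.\mathsf{MinUpdate}(x)$ with $\rho(x)<d$.
\begin{itemize}
\item The new value is $\min(y,x)$, which is not $\INF$.
\item Its key is at most $\max(\rho(y),\rho(x))\le d$.
\end{itemize}
So the next configuration is again guarded. A guarded interval can therefore end only at a configuration where $\Cand=\INF$ or $\rho(\Cand)>d$. By Lemma~\ref{lem:establish}, no larger-key value can enter $\Cand$ while the guard holds. Hence the only step that can end the interval is a reset $\Cand.\mathsf{CAS}(y,\INF)$, and a smaller-key publication never performs one. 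This shows such publications neither destroy the guard nor split an interval.

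\emph{Second claim, step 1: monotonicity.} Fix one guarded interval. Between resets, $\mathsf{MinUpdate}$ is monotone (Lemma~\ref{lem:candidate}). A reset inside the interval would write $\INF$ and end it, so within the interval $\rho(\Cand)$ is nonincreasing. Any change of $\Cand$ therefore strictly decreases the key, since keys are unique. In particular, the changes caused by first-time smaller-key publications form a strictly decreasing chain $x_1,x_2,\dots$ with $\rho(x_1)>\rho(x_2)>\cdots$, all below $d$.

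\emph{Second claim, step 2: membership in the charge set.} It remains to show that every $x_i$ lies in $\mathcal S_D$; distinctness of the $x_i$ then gives chain length at most $m$. Each $x_i$ has $\rho(x_i)<d$ and $x_i\ne D$. Its publication lies inside a guarded interval, hence after $\sigma_D$ (the guard is established by $D$'s own $\mathsf{MinUpdate}$, which follows key fixing). If the publication is also before $D$ becomes done, it falls in the pending suffix, and clause~(ii) of Definition~\ref{def:charge-set} puts $x_i\in\mathcal S_D$. Changes after $D$ is done lie outside the accounting, because the statement concerns $D$'s pending suffix.

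\emph{Main obstacle: what ``first-time'' means.} The delicate point is the word ``first-time''. A republication of a token already in the chain cannot lower the key below the current minimum, so it causes no change. A descriptor that was reset earlier and is republished now falls in a different interval. So each descriptor contributes at most one link per interval, and I will state this explicitly. Residual completed tokens present at $\sigma_D$ are not publications at all, so they are excluded from the chain; this matches the constant cleanup term discussed before the definition.
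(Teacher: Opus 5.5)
Your proof is correct and follows the paper's own argument. A smaller-key $\mathsf{MinUpdate}$ keeps $\Cand$ non-$\INF$ with key at most $d$, and only an exact-token reset can end a guarded interval, so keys strictly decrease inside an interval and every link is a distinct descriptor placed in $\mathcal S_D$ by clause~(ii). One correction: your parenthetical that the guard is established by $D$'s own $\mathsf{MinUpdate}$ is false in general, since another requester's smaller-key publication can open an interval after a reset, and the guard may already hold at $\sigma_D$; the conclusion that the counted publications follow $\sigma_D$ still holds, because the whole analysis is confined to $D$'s pending suffix.
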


\begin{proof}[Proof sketch]
Every effective smaller-key publication strictly decreases the candidate key
and comes from the charge set of size $m$.
Repeated publications do not extend this chain, so they share one guarded
interval rather than creating $m$ contention-sized phases.
\end{proof}

\begin{lemma}[Opening the guard]\label{lem:open}
Before $D$ is done, every transition from a $D$-guarded configuration to an
unguarded configuration is associated with a completed descriptor in
$\mathcal S_D$, except for at most one done descriptor already resident in
$\Cand$ at $\sigma_D$.
Each descriptor in $\mathcal S_D$ is associated with at most two such
transitions.
\end{lemma}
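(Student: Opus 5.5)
The plan is to classify which atomic steps can take a configuration from $D$-guarded to unguarded, and then charge each such step to a done descriptor. Only three operations write $\Cand$: $\mathsf{MinUpdate}$, $\mathsf{CAS}(x,\INF)$ in \textsc{Assist}, and $\mathsf{CAS}(x,\INF)$ in \textsc{HelpOwner}.

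\textbf{Which step can open the guard.} A $\mathsf{MinUpdate}$ never sets $\Cand$ to $\INF$, and it never raises the key. So from a guarded configuration it stays guarded; this is the content of Lemma~\ref{lem:establish} and Lemma~\ref{lem:smaller-chain}. Hence every guarded-to-unguarded transition is a successful exact-token reset $\Cand.\mathsf{CAS}(x,\INF)$ with $\Cand=x$ and $\rho(x)\le d$ just before it. Because $D$ is not yet done, Lemma~\ref{lem:candidate} forces $x\neq D$, so $\rho(x)<d$. The same lemma shows that $x$ is done at the reset, since a reset requires a prior done observation and $\mathsf{Done}$ is monotone. I associate the transition with this $x$.

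\textbf{Why $x$ lies in $\mathcal S_D$, or is the one residue.} Consider the last event before the transition that made $\Cand=x$. There are two cases.
\begin{enumerate}[label=(\alph*),leftmargin=2em]
  \item That event occurred after $\sigma_D$. It must then be a publication of $x$ by its requester; every requester publishes only its own descriptor, and an exact-token CAS only writes $\INF$. This is clause~(ii) of Definition~\ref{def:charge-set}, so $x\in\mathcal S_D$.
  \item $\Cand=x$ already held at $\sigma_D$ and was never overwritten. If $x$ was not done at $\sigma_D$, clause~(i) puts $x\in\mathcal S_D$. Otherwise $x$ is the single done token resident at $\sigma_D$. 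This is the allowed exception, and it occurs at most once because $\Cand$ holds one value at $\sigma_D$.
\end{enumerate}

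\textbf{At most two transitions per descriptor.} Fix $x\in\mathcal S_D$. Each transition associated with $x$ is preceded, since the previous such transition, by a fresh installation of $x$ into $\Cand$ (or by residency at $\sigma_D$). I would argue there can be at most one such installation "before done" and one "after done". The naive argument says: the requester of $x$ stops publishing once it reads $x.\mathit{done}$. Before that point, $x$ can be reinstalled after a reset only if its requester still loops. That requester's loop iteration reads done, exits, and performs at most one final $\mathsf{MinUpdate}(x)$ that raced with completion. So the associated transitions are at most: one reset of $x$ while it was installed at completion time, plus one reset of the final stale republication.

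\textbf{Main obstacle.} This counting is the delicate step. A requester whose $\mathsf{MinUpdate}$ precedes completion could still be looping repeatedly while $x$ is not done. However, any reset of $x$ requires $x$ to be done, so no reset can precede completion. After completion, the requester issues at most one more $\mathsf{MinUpdate}(x)$: the one in flight when done was set, since the loop test then exits. That yields at most two resets, one of the occupancy at completion and one of the post-completion republication. I would formalize this by linearizing the done write and inspecting the requester's program counter at that point.
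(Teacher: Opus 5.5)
Your proposal is correct and follows essentially the same route as the paper. Only an exact-token reset of a done candidate $x\neq D$ can open the guard, and $x$ lies in $\mathcal S_D$ by clause (i) or (ii), except for the single done token resident in $\Cand$ at $\sigma_D$; the multiplicity bound of two then comes from one removal after completion plus at most one stale republication by $x$'s requester, whose loop test was already read as active. Your last-write case split is a more explicit version of the paper's ``initially resident and active or published later'' step, not a different argument.
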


\begin{proof}[Proof sketch]
The guard can be destroyed only by a successful reset of the exact current
candidate after that candidate is done.
Except for the initial residue, a smaller-key candidate belongs to
$\mathcal S_D$ by the charge-set definition.
It is removed once after completion and can be reinserted only by one
already-started requester iteration, giving multiplicity two.
\end{proof}

\begin{lemma}[Done-candidate episodes]\label{lem:done-episodes}
For each descriptor $x$, there are at most two maximal intervals in which
$\Cand=x$ and $x$ is done.
Every failed cleanup CAS by $D$ after reading such an $x$ is charged to one
of these intervals.
\end{lemma}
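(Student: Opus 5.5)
To prove Lemma~\ref{lem:done-episodes}, I will track how a descriptor $x$ enters $\Cand$. The only operations that write a non-$\INF$ value into $\Cand$ are $\Cand.\mathsf{MinUpdate}(x)$ calls, and these are issued only by $x$'s own requester at the top of its loop. An interval in which $\Cand=x$ holds and $x$ is done must begin in one of two ways: either $x$ becomes done while it is already resident in $\Cand$, or a $\mathsf{MinUpdate}(x)$ installs $x$ after $x$ is already done. Since $\mathit{done}$ is monotone, the first kind of start happens at most once, at the step where $x$ first becomes done. For the second kind, the requester of $x$ re-tests $\neg x.\mathit{done}$ before every iteration. So once $x$ is done, at most one $\mathsf{MinUpdate}(x)$ can still execute: the one belonging to the iteration whose loop test was passed before completion. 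Any later iteration observes done and exits. Hence at most one episode starts by reinsertion after completion, and the number of maximal done-$x$ episodes is at most two.

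One subtlety is that a maximal interval could be split without $x$ leaving $\Cand$. It cannot, because $x$'s done status never reverts. Every end of an episode is therefore a change of $\Cand$ away from $x$, either by an exact-token reset $\Cand.\mathsf{CAS}(x,\INF)$ or by a smaller-key $\mathsf{MinUpdate}$ (Lemma~\ref{lem:candidate}). Every new start is one of the two events counted above. The only way to return to $\Cand=x$ after a departure is another $\mathsf{MinUpdate}(x)$, which we have already bounded.

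For the charging claim, I will consider a cleanup CAS $\Cand.\mathsf{CAS}(x,\INF)$ by $D$ issued from \textsc{Assist} after $D$ read $\Cand=x$ and then read $x.\mathit{done}$ as true. At the moment of the done read, $x$ is done. I will charge the failed CAS to the episode that contains either the $\Cand$-read or the done-read. The $\Cand$-read step lies in a configuration where $\Cand=x$. If $x$ was already done at that step, that configuration belongs to an episode, and I charge to it. If $x$ was not yet done, I will argue that $x$ became done while $\Cand=x$ still held, and then charge to the episode opened by that completion. The alternative is that $\Cand$ left $x$ before completion. Lemma~\ref{lem:candidate} says it could then only have been replaced by a smaller key, since a reset of an undone $x$ is impossible. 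In that case the interval between the two reads contains no episode, and the failure must instead be charged to the later reinsertion episode or left uncharged.

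This case is the main obstacle. My first approach will be to redefine the charging target as the most recent episode of $x$ that begins no later than $D$'s done-read. If no such episode exists, then $x$ left $\Cand$ before it became done and was never reinserted before the done-read. In that situation I will treat the failed CAS as belonging to the first episode that begins afterwards, or to the reinsertion episode. Because the lemma only asks that each failed cleanup by $D$ be charged to one of these at most two intervals, a slightly generous attribution suffices. The consequence for $D$ is that, per $x$, its failed cleanups are bounded by the number of its \textsc{Assist} calls overlapping those two intervals. The separate bound on how many times $D$ can fail within one interval I leave to the later counting argument.
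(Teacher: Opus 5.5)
Your episode count is the paper's argument. An episode can begin only when $x$ becomes done while resident in $\Cand$, which happens at most once because done is monotone. Otherwise it begins when a $\mathsf{MinUpdate}(x)$ lands after completion, and $x$'s sequential requester can have at most one loop test already passed when $x$ completes. That half of your proposal is correct.

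The charging half has a step that fails. The failing case is the one you isolate: $D$ reads $\Cand=x$ while $x$ is undone, a smaller key then displaces $x$, and only afterwards does $x$ complete. Your fallback target (``the first episode that begins afterwards, or the reinsertion episode'') need not exist. Here is a schedule.

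Start with $\Owner=(\nullowner,g)$ and $\Cand=x$ undone. Let $x$'s requester $H$ read $(\nullowner,g)$, read $\Cand=x$, read $x$ not done, and pause before its owner CAS. Let $D$ read $(\nullowner,g)$ and $\Cand=x$, and pause before its done test. Let a delayed request $y$ with smaller key than $x$ publish, so $\Cand=y$. Now $H$ installs $(x,g)$, runs $x$, fails its own cleanup CAS, releases, and exits at its next loop test without republishing. $D$ then reads $x.\mathit{done}$ true, and its $\Cand.\mathsf{CAS}(x,\INF)$ fails. Yet $x$ has no done-candidate episode anywhere in the execution.

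So under your reading of the hypothesis (a true done test after the candidate read), the second claim is false, and no generous attribution to episodes can rescue it. The paper's one-line charging is sound only under a stricter reading: $D$'s candidate read itself occurs while $\Cand=x$ and $x$ is already done. The read configuration then lies in an episode, and your first case is the entire proof.

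The race you found is real, but it must be charged outside this lemma. One option is the first owner tenure that begins after $D$'s empty-owner read. $x$'s completing tenure is current when $x$ becomes done and cannot be released earlier. Hence some tenure begins strictly between $D$'s owner read and its done test, and distinct iterations of $D$ receive distinct such tenures.
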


\begin{lemma}[One larger-key bypass per guard]\label{lem:one-larger}
During one maximal $D$-guarded interval, the requester encounters at most one
larger-key owner tenure, including a tenure already current when the interval
begins.
\end{lemma}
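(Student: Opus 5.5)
The plan is to fix one maximal $D$-guarded interval $I$ and study the larger-key tenures that $D$ can encounter during it. Two cases arise. Either such a tenure is already current when $I$ begins, or it begins inside $I$. In both cases the tools are Lemma~\ref{lem:establish}, which forbids a larger-key descriptor from becoming $\Cand$ while the guard holds, and Lemma~\ref{lem:generation}, which forbids two installations from one empty generation. The goal is to show that at most one larger-key tenure can start, or already be current, relative to $I$ in a way that $D$ observes.

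The first step is to classify how a larger-key descriptor $y$ (with $\rho(y)>d$) can begin a tenure $(y,g)$ at a step inside $I$. The installing CAS in \textsc{Assist} is preceded, in order, by a read of $(\nullowner,g)$, a read of $\Cand=y$, and a read of $y$ not done. By Lemma~\ref{lem:establish}, $\Cand$ never holds $y$ during $I$, so the $\Cand$ read must occur before $I$ begins. Thus every larger-key installation inside $I$ comes from a stale snapshot that predates $I$. All these stale installers read an empty owner before $I$, so I would show they all expect the same generation $g_0$: the generation of the last empty owner before $I$, or else an earlier one. Here the order of operations matters. If the owner was nonempty, or changed generation between a helper's owner read and the start of $I$, then that helper's expected pair $(\nullowner,g)$ for the older $g$ has already been made unreachable, because generations only increase. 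Under Lemma~\ref{lem:generation}, at most one installation from $(\nullowner,g_0)$ succeeds. After it succeeds, every later empty owner carries generation $>g_0$, while no new stale snapshot of a larger-key candidate can be formed inside $I$. Together with the tenure already current at the start of $I$, this gives two candidates for encountered larger-key tenures, and I must cut that to one.

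The main obstacle is this reduction from two to one. Suppose a larger-key tenure $(y,g)$ is current when $I$ begins. Then the owner is not $(\nullowner,g')$ for any $g'\ge g$ at that point. Every stale snapshot read an empty owner before $I$, hence with generation $\le g$ (since generations increase). So every such snapshot expects a generation that dies when tenure $(y,g)$ is installed or released. It follows that no stale larger-key installation can succeed inside $I$, and only the current tenure counts. If instead no tenure is current at the start of $I$, only the single $g_0$ installation can occur. I would verify carefully that a smaller-key tenure occurring between them only advances the generation further, which makes it easier for stale CASes to fail. Finally, I would account for ``encounters'' in the sense of the charging convention. A failed CAS by $D$ that was prepared from an empty-owner read is charged to the tenure making its pair unreachable and to the interval containing the read. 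Since that tenure is either the unique larger-key one already counted or a smaller-key one, the count of larger-key tenures encountered in $I$ stays at most one.
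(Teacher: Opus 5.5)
Your core argument is the paper's. By Lemma~\ref{lem:establish}, every larger-key installation inside $I$ comes from a candidate read, and hence an owner read, taken before $I$. If a larger-key tenure is current when $I$ begins, every such snapshot's expected pair is already unreachable. Otherwise the surviving snapshots all expect the single empty generation current at the start of $I$, so Lemma~\ref{lem:generation} admits one success and its release invalidates the rest. That part is fine and matches the paper's sketch closely.

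Your final paragraph, however, asserts a dichotomy that is false. The tenure that invalidates an empty-owner read taken by $D$ inside $I$ need not begin inside $I$. Here is a schedule where it does not.
\begin{enumerate}
  \item Early in $I$, a stale larger-key $Z$ is installed; $D$ reads it, services it and releases it, leaving the owner at $(\nullowner,g)$.
  \item $D$ republishes, reads $(\nullowner,g)$ and pauses before its candidate read. At this point $\Cand$ holds a done smaller-key token $x$, for instance from $x$'s one stale post-completion republication.
  \item Another helper resets $x$. This ends $I$ before any tenure claims generation $g$.
  \item In the tail, a larger-key $Y$ publishes, is installed from $(\nullowner,g)$ and finishes. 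Then a fresh larger-key $Y'$ publishes.
  \item $D$ resumes, reads $Y'$ not done, and its CAS fails because of $Y$'s tenure.
\end{enumerate}
$Y$'s tenure is neither smaller-key nor the unique larger-key tenure of $I$, which was $Z$. Under the convention exactly as you quote it, this failure would be charged to $I$, giving $I$ a second larger-key tenure.

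The paper's proof of this lemma bounds only tenures current at some point of $I$. Lemma~\ref{lem:tail} absorbs exactly this case as the tail's single larger-key obligation: an opening after $D$'s empty-owner read, followed by one candidate read and at most one owner CAS. You should restrict your claim the same way and defer such cross-boundary failures to the tail, rather than asserting the dichotomy.
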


\begin{proof}[Proof sketch]
By Lemma~\ref{lem:establish}, no larger-key descriptor can become the
candidate while the interval is guarded.
If a larger-key owner is current at interval start, it is the one bypass; its
release invalidates all earlier snapshots, and the guard prevents new ones.
Otherwise every prepared larger-key installation expects the same empty
generation, so one succeeds and its release invalidates the cohort.
\end{proof}

\begin{lemma}[One larger-key tail bypass]\label{lem:tail}
After the guard is opened and before $D$'s next minimum update, the process
executing $\mathsf{Lock}(D)$ incurs at most one larger-key obligation:
an encountered owner tenure or one done-candidate episode that it reads.
\end{lemma}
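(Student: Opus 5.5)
We need to prove Lemma~\ref{lem:tail}. The window in question runs from a guard-opening transition to $D$'s next $\Cand.\mathsf{MinUpdate}(D)$. Within it, the requester executes at most one iteration of the loop body, i.e.\ the remainder of the current \textsc{Assist}. The guard cannot be opened during a nested call, because $D$'s loop only publishes at the top of each iteration.

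The plan is to trace the control flow of that single remaining \textsc{Assist} after the opening point, and to show that each of its exit paths touches at most one larger-key object. The key observation is that \textsc{Assist} has no internal loop. It performs one owner read, then one of the following:
\begin{enumerate}[label=(\alph*),leftmargin=2em]
  \item a single call to \textsc{HelpOwner}, which makes one revalidation read, runs one thunk, and performs one cleanup and one release attempt on the same pair $(x,g)$;
  \item a candidate read followed by a done test and one cleanup CAS;
  \item a candidate read, a done test, and one owner CAS, possibly followed by \textsc{HelpOwner} on the installed pair.
\end{enumerate}

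First step: case on where $D$'s process sits when the guard opens. The opening step is performed by some process's exact-token reset $\Cand.\mathsf{CAS}(x,\INF)$ on a done $x$, as in Lemma~\ref{lem:open}. If $D$ is before its owner read, the remaining \textsc{Assist} either reads a nonnull owner pair $(w,g)$ or proceeds to the candidate read.
\begin{itemize}[leftmargin=1.5em]
  \item \emph{Nonnull owner.} All subsequent management steps of this iteration concern the single tenure $(w,g)$. By Lemma~\ref{lem:generation} and Lemma~\ref{lem:one-tenure}, this is at most one encountered tenure.
  \item \emph{Empty owner, done candidate.} If the candidate read returns a done larger-key $x$, the single failed or successful cleanup CAS is charged to one done-candidate episode of $x$, via Lemma~\ref{lem:done-episodes}, and the call returns.
  \item \emph{Empty owner, not-done candidate.} If it returns a not-done $x$, the owner CAS expects $(\nullowner,g)$. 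Under the paper's charging convention, a failure is charged to the first tenure making that pair unreachable. A success installs $x$, and the ensuing \textsc{HelpOwner} concerns the same tenure $(x,g)$. Either way there is one tenure.
\end{itemize}
If $D$ is already past its owner read when the guard opens, the same analysis applies to the suffix, with the snapshot fixed before the opening. At most one object named in that snapshot can be larger-key.

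Second step: rule out double counting in the case where $D$ reads a nonnull owner and the tenure encountered is smaller-key. Then no larger-key obligation arises at all. Only larger-key objects count toward the lemma, and smaller-key ones are charged to $\mathcal S_D$ elsewhere, so it suffices to observe that each path names exactly one object $x$ or $w$.

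The main obstacle is the failed-CAS case together with a stale snapshot taken before the opening. Here the charged tenure might be one that began inside the preceding guarded interval, and it could then be double-counted against Lemma~\ref{lem:one-larger}. My first attempt is to apply the convention ``charge to the interval containing the read'': if the empty-owner read precedes the opening, charge the failure to that guarded interval's single bypass; otherwise it belongs to the tail. Either way each physical CAS is charged once, and the tail contributes at most one obligation. A final remark handles thunk steps: $x.\mathsf{run}()$ is at most one run costing $T$, so the tail costs $O(T+1)$ local steps.
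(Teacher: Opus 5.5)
Your proposal is correct and follows essentially the paper's route: a program-counter case split on where $D$ is when the guard opens. As in the paper, you observe that the single remaining \textsc{Assist} suffix, which has no internal loop, exposes at most one owner path or one done-candidate read, and you charge a failed installation through the interval containing its empty-owner read, which is the paper's convention.

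One slip: your claim that the guard cannot be opened during a nested call is false. Another process, or $D$ itself through the cleanup CAS in \textsc{HelpOwner}, can open it while $D$ is inside \textsc{HelpOwner}; the paper lists this case explicitly. Your later ``already past its owner read'' case covers it, so nothing in your argument depends on that sentence.
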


\begin{proof}[Proof sketch]
At an opening, $D$ is between iterations, between its minimum update and
owner read, in one occupied-owner suffix, in one empty-owner suffix, or in
the cleanup that caused the opening.
These cases respectively expose no obligation, one owner path, one owner
path, one mutually exclusive candidate-cleanup/owner path, or no further read
before return.
See Appendix~\ref{app:progress} (``One larger-key tail bypass'') for the
program-counter analysis.
\end{proof}

\begin{lemma}[Larger-key bypass compression]\label{lem:late-compression}
Before $D$ becomes done, it incurs at most $4m+3$ larger-key bypasses,
independent of how many larger-key calls are invoked.
\end{lemma}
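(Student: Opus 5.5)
We partition $D$'s pending suffix into maximal $D$-guarded intervals and the unguarded gaps between them. Then we count larger-key bypasses separately inside the guarded intervals and inside the gaps.

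\textbf{Structure of the suffix.} Every loop iteration of $\mathsf{Lock}(D)$ begins with $\Cand.\mathsf{MinUpdate}(D)$. By Lemma~\ref{lem:establish}, each such update either finds $D$ done or establishes the guard. So the pending suffix alternates: an optional initial segment before $D$'s first minimum update, then guarded intervals separated by openings. By Lemma~\ref{lem:smaller-chain}, smaller-key publications never split a guarded interval. Hence every boundary between consecutive guarded intervals is a guard opening in the sense of Lemma~\ref{lem:open}. That lemma bounds the number of openings before $D$ is done by $2m+1$: at most two per descriptor of $\mathcal S_D$, plus one for the done residue resident in $\Cand$ at $\sigma_D$. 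The number of guarded intervals is therefore at most $2m+2$, and the number of unguarded tail segments following an opening is at most $2m+1$.

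\textbf{Charging.} Inside each guarded interval, Lemma~\ref{lem:one-larger} gives at most one larger-key owner tenure encountered, including one current at the interval's start. This totals at most $2m+2$. In each tail after an opening and before $D$'s next minimum update, Lemma~\ref{lem:tail} gives at most one larger-key obligation, either an owner tenure or a done-candidate episode read. This totals at most $2m+1$. The remaining piece is the initial segment, from $\sigma_D$ to $D$'s first $\mathsf{MinUpdate}$. In that segment the requester performs no shared step other than the FAI and the minimum update itself, because \textsc{Assist} is only called after a publication. It therefore contributes nothing, except that a tenure current at $\sigma_D$ (the "initial residue") may be the first one encountered. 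The sum $(2m+2)+(2m+1)=4m+3$ matches the statement.

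\textbf{Main obstacle.} The subtle step is showing that every larger-key bypass of $D$ is attributed to exactly one interval, so that nothing is double-missed. This covers failed cleanup CASes and delayed owner CASes that were prepared in one interval but executed in another. I would use the charging convention stated before the guarded-configuration definition: a CAS prepared from an empty-owner read is charged to the interval containing the read and to the first tenure that invalidates it. Lemma~\ref{lem:done-episodes} then handles failed cleanup CASes on larger-key done candidates, each charged to an episode that is itself read inside a tail (because during a guard no larger-key descriptor is the candidate). With these conventions, every larger-key obligation falls in a guarded interval or in a tail, which gives $4m+3$. This count is independent of the number of larger-key invocations, because none of the cited lemmas refers to them.
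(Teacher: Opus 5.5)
Your proposal is correct and follows the paper's proof: you bound the number of openings by $h\le 2m+1$ via Lemma~\ref{lem:open}, then charge at most one larger-key bypass to each of the at most $h+1$ guarded intervals (Lemma~\ref{lem:one-larger}) and each of the at most $h$ tails (Lemma~\ref{lem:tail}), giving $(h+1)+h\le 4m+3$. The one thing to tighten is your remark that the initial segment contributes ``nothing, except'' a residual tenure current at $\sigma_D$: that tenure needs no separate term, because Lemma~\ref{lem:one-larger} already counts a tenure current when the first guarded interval begins, and this is what keeps your sum at $4m+3$.
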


\begin{proof}
Let $h$ be the number of guard openings.
Lemma~\ref{lem:open} gives $h\le2m+1$, including the initial done-candidate
residue.
The $h+1$ guarded intervals and $h$ tails each contribute at most one
larger-key bypass, for $(h+1)+h\le4m+3$.
\end{proof}

\begin{lemma}[Bounded encountered tenures]\label{lem:tenures}
Before $D$ becomes done, its requesting process helps
$O(m+1)$ distinct owner tenures.
\end{lemma}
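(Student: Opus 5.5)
We will bound the owner tenures that $D$'s requester helps by sorting each such tenure $(x,g)$ by the key of $x$ relative to $d$. There are three cases: $\rho(x)>d$, $\rho(x)<d$, and $x=D$. A tenure is \emph{helped} only if $D$ enters \textsc{HelpOwner}$(x,g)$ and passes revalidation. Any such tenure is therefore encountered in the sense defined before Lemma~\ref{lem:one-larger}.

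The larger-key case is covered by Lemma~\ref{lem:late-compression}. It bounds all larger-key bypasses, including encountered owner tenures, by $4m+3$. So at most $4m+3$ helped tenures have $\rho(x)>d$.

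Next we handle the smaller-key tenures. Let $(x,g)$ be a helped tenure with $\rho(x)<d$. There are two subcases.
\begin{enumerate}[leftmargin=2em]
  \item The tenure begins after $\sigma_D$. Clause~(iii) of Definition~\ref{def:charge-set} then puts $x\in\mathcal S_D$.
  \item The tenure is already current at $\sigma_D$. It is the single initial owner residue noted after Definition~\ref{def:charge-set}, and we charge it one unit.
\end{enumerate}
By Lemma~\ref{lem:one-tenure}, each $x\in\mathcal S_D$ is installed as owner at most once. This gives at most $m+1$ smaller-key tenures in total.

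The case $x=D$ contributes at most one tenure, again by Lemma~\ref{lem:one-tenure}.

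Summing the three cases gives at most $(4m+3)+(m+1)+1=5m+5=O(m+1)$ distinct tenures.

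The main obstacle is checking that ``helps'' really falls under ``encountered'' for the larger-key case. In particular, we must confirm that Lemma~\ref{lem:late-compression} counts owner tenures and not only done-candidate episodes or failed CASes. A tenure $D$ itself installed with a successful CAS in \textsc{Assist} is encountered when $D$ reads the owner pair during revalidation in \textsc{HelpOwner}. A tenure reached through the $w\ne\nullowner$ branch is encountered when $D$ reads $\Owner$. In both paths the tenure is witnessed by an owner read made while that tenure was installed, so it matches the definition.

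Distinctness matters as well. $D$ may help the same tenure across several iterations, but we count distinct tenures, and generations make them distinct. Lemma~\ref{lem:generation} identifies each tenure uniquely by its generation $g$, so repeated helping of one tenure is counted once.
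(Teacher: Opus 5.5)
Your proposal is correct and matches the paper's own accounting: smaller-key tenures are charged to $\mathcal S_D$ via clause~(iii) and Lemma~\ref{lem:one-tenure}, larger-key tenures to the $4m+3$ bypasses of Lemma~\ref{lem:late-compression}, and one unit each goes to $D$ and the initial owner, for the same $5m+5$ total. The only cosmetic difference is that you count the initial-owner unit only when that owner has a smaller key, and let a larger-key tenure current at $\sigma_D$ be absorbed as the first guarded interval's bypass, whereas the paper adds that unit as a key-independent constant.
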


\begin{lemma}[One local service per tenure]\label{lem:one-service}
For each encountered tenure, the requester follows a matching
\textsc{HelpOwner} path at most once and invokes service at most once within
that path (zero times if done is already true).
\end{lemma}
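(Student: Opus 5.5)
The plan is to trace the requester's control flow and use owner-pair uniqueness together with Lemma~\ref{lem:generation} and Lemma~\ref{lem:one-tenure}. Fix an encountered tenure $(x,g)$. The requester reaches a \textsc{HelpOwner}$(x,g)$ call along only two paths. In the first, \textsc{Assist} reads $\Owner=(x,g)$ with $x\ne\nullowner$. In the second, its own CAS from $(\nullowner,g)$ to $(x,g)$ succeeds. Let us call a call \emph{matching} if it passes the revalidation $\Owner.\mathsf{Read}()=(x,g)$.

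The key claim is that once the requester has executed one matching \textsc{HelpOwner}$(x,g)$, every later \textsc{Assist} iteration of the same requester either observes a different owner pair or finds the tenure already released. We prove this in order.

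\begin{enumerate}
\item Inside a matching call, the requester either observes $x.\mathit{done}$ immediately, or it invokes $x.\mathsf{run}()$. By the service contract, that invocation returns only once $x$ is done. So after the \textsc{run} line, $x.\mathit{done}$ holds by monotonicity, and the requester performs the candidate cleanup CAS and the release CAS $(x,g)\to(\nullowner,g+1)$.
\item Either the requester's release succeeds, or some other process has already changed the owner pair away from $(x,g)$. The owner pair is a CAS object whose only transition out of $(x,g)$ is the release. Hence, when the matching call returns, tenure $(x,g)$ has ended.
\item By Lemma~\ref{lem:generation}, generations only increase. By Lemma~\ref{lem:one-tenure}, $x$ is never installed again. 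The pair $(x,g)$ therefore never reappears in $\Owner$. Consequently any later \textsc{Assist} read returns a different pair, any later CAS producing $(x,g)$ would have to start from $(\nullowner,g)$, which is unreachable, and any later revalidation against $(x,g)$ fails.
\end{enumerate}

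Thus the requester follows at most one matching \textsc{HelpOwner} path per tenure. Within that path the code reaches the \textsc{run} line once, and only when $\neg x.\mathit{done}$ was read. This gives at most one service invocation, and zero if done was already true.

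The main obstacle is a technical interleaving. Between one iteration's owner read and its \textsc{HelpOwner} call, the requester may hold a stale snapshot $(x,g)$ that is only later revalidated. I would argue that a stale snapshot does not produce a second matching path. It is either the first matching path, or its revalidation happens after release and fails by step~3. A failed revalidation is not a matching path; it is the ``encountered but not helped'' case already accounted for in Lemma~\ref{lem:tenures}.
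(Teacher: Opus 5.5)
Your proposal is correct and follows essentially the same argument as the paper. A failed revalidation means tenure $(x,g)$ has already ended permanently; a successful one leads to a service call that returns only with $x.\mathit{done}$ true, after which both cleanup CAS operations are attempted, so the tenure is released before the requester's next iteration and cannot be matched again — and your added observations (the release is the only transition out of $(x,g)$, and Lemma~\ref{lem:one-tenure} keeps $(x,g)$ from ever reappearing in $\Owner$) simply make explicit what the paper's ``ended permanently'' claim relies on.
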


\begin{lemma}[Terminal suffix]\label{lem:terminal}
After $D$ first becomes done and before its request responds, its invoking
process performs only $O(1)$ management steps and at most one service call.
\end{lemma}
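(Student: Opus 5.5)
We prove Lemma~\ref{lem:terminal} by a program-counter case analysis of the invoking process of $\mathsf{Lock}(D)$ at the step $\delta$ where $D.\mathit{done}$ first becomes true. The key fact is that the while-loop guard reads $D.\mathit{done}$, and done is monotone. So after $\delta$, the loop test at line~4 returns false at its next evaluation. The requester therefore finishes at most the remainder of the one iteration in progress at $\delta$, plus one loop test and the return. We bound that remainder by inspecting Algorithm~\ref{alg:seniorlock}.

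Concretely, we split according to where the requester is at $\delta$.
\begin{enumerate}[label=(\alph*),leftmargin=2em]
  \item Before or at the loop test, or at the $\mathsf{MinUpdate}$ of line~5. Then at most one minimum update, one full \textsc{Assist}, and one further test remain. A full \textsc{Assist} is straight-line code containing at most one \textsc{HelpOwner} call. It therefore costs at most: an owner read, a candidate read, one done read, one cleanup or owner CAS, and, inside \textsc{HelpOwner}, one revalidating read, two done reads, one candidate CAS, and one owner CAS. That is a constant number of steps and at most one $x.\mathsf{run}()$.
  \item Inside \textsc{Assist} or \textsc{HelpOwner}. Only a suffix of this same straight-line code remains, so the same constant bound holds. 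There is still at most one service call, namely the one in the single \textsc{HelpOwner} invocation of this \textsc{Assist}.
  \item Inside a $\mathsf{run}()$ already in progress at $\delta$. This is that single service call; its cost is $T$ by the service contract and is not a management step.
\end{enumerate}
None of these paths contains a loop or recursion other than the outer while loop. \textsc{Assist} calls \textsc{HelpOwner} at most once (either branch), and \textsc{HelpOwner} calls $\mathsf{run}$ at most once. The step counts are therefore constant.

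One subtlety is that the service call in case~(a) or~(b) may be a run of some other owner $x\ne D$, not of $D$. The lemma only claims at most one service call of any descriptor, so this is fine. Its cost is bounded by $T$ because every $\mathsf{run}()$ call takes at most $T$ local steps independently of helper count. We do not need that it returns promptly because of $D$. We should still check that $\mathsf{run}$ always returns within $T$ steps even when its owner tenure has already been released. The contract guarantees this: it bounds every call to observe done or finish within $T$ steps.

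The main obstacle is making sure no path revisits the loop body after $\delta$. The risk is that the loop test might be evaluated before $\delta$ and the next iteration then starts. This case is exactly case~(a): at most one extra full iteration follows, and the subsequent test must read true by monotonicity of done. Summing gives $O(1)$ management steps and at most one service call, as claimed.
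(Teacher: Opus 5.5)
Your proof is correct and follows the paper's argument. Because $D.\mathit{done}$ is monotone, the requester either exits at its next loop test or finishes the single iteration whose test already read false. That iteration's straight-line \textsc{Assist}/\textsc{HelpOwner} code has constant management work and at most one $\mathsf{run}()$ call, and your program-counter case split spells out the paper's one-line version in more detail.
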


\begin{table}[t]
\caption{Explicit charging for a fixed descriptor $D$.}
\label{tab:charges}
\begin{tabularx}{\columnwidth}{@{}Xll@{}}
\toprule
Victim-visible event & Charge & Bound \\
\midrule
Guard opening & done smaller-key descriptor & $h\le 2m+1$ \\
Larger-key tenure inside a guard & guarded interval & $\le h+1$ \\
Larger-key obligation in an unguarded tail & preceding opening & $\le h$ \\
Smaller-key owner tenure & its descriptor & $\le m$ \\
Victim/residual tenure & constant term & $\le 2$ \\
\midrule
Owner/bypass subtotal & last four rows & $\le 5m+5$ \\
\bottomrule
\end{tabularx}
\end{table}

Including one initially resident done candidate, Table~\ref{tab:charges}
totals at most $7m+6$ charges and hence $O(m+1)$.

\begin{lemma}[Management-step charging]\label{lem:management}
The requester of $D$ performs $O(m+1)$ management steps before $D$ is done.
\end{lemma}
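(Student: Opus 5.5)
The plan is to bound management steps by decomposing the requester's execution before $D$ is done into iterations of the \textbf{while} loop. Each iteration does one $\mathsf{MinUpdate}$ and then one call to \textsc{Assist}. Each \textsc{Assist} call executes $O(1)$ management steps: at most one owner read, one candidate read, one done read, one cleanup CAS or one owner CAS, plus at most one \textsc{HelpOwner} with its revalidation read, done reads, and two CASes. Service steps inside $x.\mathsf{run}()$ are excluded. It therefore suffices to show that the number of loop iterations before $D$ becomes done is $O(m+1)$. I would do this by charging every iteration to an event already counted in Table~\ref{tab:charges}, at $O(1)$ iterations per event.

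Classify each iteration by how its \textsc{Assist} exits. In case (a) it reads a nonempty owner pair, so it encounters a tenure. In case (b) it reads $\INF$ from $\Cand$. In case (c) it reads a done candidate and attempts cleanup. In case (d) it attempts the owner CAS, successfully or not. Iterations in (a) and in successful (d) are charged to the encountered tenure. By Lemma~\ref{lem:one-service}, each tenure is followed through a matching \textsc{HelpOwner} path at most once. By Lemma~\ref{lem:generation} and the encounter definition, a tenure read in (a) ends before the requester's next owner read unless it is still current; if it is still current, the next iteration sees the same pair. I expect this to be the main obstacle, because repeated reads of one still-current tenure must not be counted as distinct iterations. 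To handle it, I would argue that \textsc{HelpOwner} returns only after revalidation fails or after observing done and attempting release. In both cases the tenure $(w,g)$ has ended by the time of the next owner read, since the release CAS either succeeds or fails because someone else released it. So each tenure is charged $O(1)$ iterations. Lemma~\ref{lem:tenures} then bounds these iterations by $O(m+1)$.

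A failed owner CAS in (d) is charged, as in the encounter convention, to the first tenure that made the expected pair $(\nullowner,g)$ unreachable. That tenure began after the requester's empty-owner read, and it ends before any later owner read of generation $g$. So each tenure absorbs $O(1)$ such failures from this requester, and these are again covered by Lemma~\ref{lem:tenures}. Iterations in (c) are charged to done-candidate episodes. Lemma~\ref{lem:done-episodes} gives at most two episodes per descriptor. Only descriptors in $\mathcal S_D$, the initial residue, and the larger-key tail obligations of Lemma~\ref{lem:tail} can appear, for $O(m+1)$ in total. Within one episode the requester performs at most one such iteration, because a successful cleanup ends the episode and a failed cleanup means the episode already ended.

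For case (b), reading $\INF$ right after its own $\mathsf{MinUpdate}(D)$ means that, by Lemma~\ref{lem:establish}, the guard was opened in between. That event is one of the $h\le 2m+1$ openings of Lemma~\ref{lem:open}, and each opening is charged at most once, because the next $\mathsf{MinUpdate}$ re-establishes the guard. Summing the four cases with the constant residue and victim terms gives $O(m+1)$ iterations, and hence $O(m+1)$ management steps, before $D$ is done.
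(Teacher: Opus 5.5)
Your proposal is correct and follows the paper's own proof: you partition the precompletion iterations by the \textsc{Assist} outcome and charge occupied-owner and owner-CAS iterations to tenures (multiplicity at most two), $\INF$ reads to the intervening guard opening, and done-candidate reads to maximal done episodes, with all charge sets bounded by Table~\ref{tab:charges}. One small citation fix: a tenure charged for a failed owner CAS or a failed revalidation need not be \emph{helped}, so its count should come from the encountered-tenure total behind Table~\ref{tab:charges} and Lemma~\ref{lem:late-compression}, not from Lemma~\ref{lem:tenures}.
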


\begin{proof}[Proof sketch]
Partition $D$'s precompletion iterations by their \textsc{Assist} outcome.
An occupied-owner path or owner-CAS failure is charged to the tenure that
changed the observed owner; one tenure receives at most a failed installation
and one subsequent owner path, which ends or releases it.
An $\INF$ read is charged to its intervening opening, once before
republication.
A done-candidate path is charged to its maximal done episode; sequential $D$
attempts cleanup at most once per episode.
Table~\ref{tab:charges} bounds all three charge sets by $O(m+1)$.
See Appendix~\ref{app:progress} (``Management-step charging'') for the
complete case analysis.
\end{proof}

\begin{theorem}[Finite charge-set service]\label{thm:finite-charge}
For every execution $\alpha$, ordered service descriptor $D$, and charge set
$\mathcal S_D(\alpha)$, after $\sigma_D$ and until response the invoking
process performs at most
$O(|\mathcal S_D|+1)$ management steps and
\[
  O((|\mathcal S_D|+1)(T+1))
\]
local shared-memory steps.
If initialization and key assignment use $O(1)$ shared-memory steps, the same
asymptotic bounds hold for the full invocation.
\end{theorem}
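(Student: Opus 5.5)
The proof assembles the lemmas of this section. Split the invoking process's steps after $\sigma_D$ into three parts: (a) management steps before $D$ first becomes done, (b) steps inside $x.\mathsf{run}()$ calls before $D$ is done, and (c) the terminal suffix from the moment $D$ is done until the response.

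For (a), Lemma~\ref{lem:management} gives $O(m+1)$ management steps directly. That lemma rests on the charging in Table~\ref{tab:charges}. Its ingredients are the opening bound $h\le 2m+1$ from Lemma~\ref{lem:open}, the larger-key compression of Lemma~\ref{lem:late-compression}, and the done-episode bound of Lemma~\ref{lem:done-episodes}. I will also check the loop structure. Each iteration of the \textbf{while} loop in \textsc{Lock} performs a constant number of management steps outside service: one done read, one minimum update, at most two owner reads, one candidate read, one done read, and at most three CAS operations. Each iteration ends in exactly one \textsc{Assist} outcome, and each outcome is charged by Lemma~\ref{lem:management} to a tenure, an opening, or a done episode. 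So the iteration count, and hence the management count, is $O(m+1)$.

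For (b), service calls occur only inside \textsc{HelpOwner} after a successful revalidation of some tenure $(x,g)$. By Lemma~\ref{lem:one-service}, the requester invokes service at most once per encountered tenure. By Lemma~\ref{lem:tenures}, it encounters $O(m+1)$ distinct tenures before $D$ is done. By Definition~\ref{def:ordered-service}, each service call costs at most $T$ local steps. Part (b) therefore totals $O((m+1)T)$.

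For (c), Lemma~\ref{lem:terminal} gives $O(1)$ management steps and at most one service call, so at most $O(T+1)$ steps. Adding the three parts gives $O(m+1)$ management steps and $O((m+1)(T+1))$ total local steps after $\sigma_D$. For the full invocation, add the $O(1)$ initialization and key-assignment steps, which are absorbed in the same bound.

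The main obstacle is the seam between (a)/(b) and (c), together with the claim that every service invocation is attributed to some encountered tenure. A service call might straddle the moment $D$ becomes done: it begins before and ends after. I would charge such a straddling call to the tenure it revalidated, so it is counted in (b), and start (c) only after that call returns. Lemma~\ref{lem:terminal} then covers the remaining suffix without double counting beyond a constant. I also need every service invocation to lie on a matching \textsc{HelpOwner} path following a successful revalidation read. That read is an encounter in the sense defined before Lemma~\ref{lem:one-larger}, so Lemma~\ref{lem:tenures} applies.
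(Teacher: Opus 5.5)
Your proposal is correct and matches the paper's proof. The paper likewise uses Lemma~\ref{lem:management} and the charging in Table~\ref{tab:charges} to bound management steps and service calls by $O(|\mathcal S_D|+1)$, charges at most $T$ per service call, and adds $O(T+1)$ for the terminal suffix via Lemma~\ref{lem:terminal}. Counting service calls through Lemmas~\ref{lem:tenures} and~\ref{lem:one-service} gives the same bound as reading it off the table. Your per-iteration tally also omits the two done reads inside \textsc{HelpOwner}, but the count stays constant, so nothing changes.
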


\begin{proof}
Table~\ref{tab:charges} and Lemma~\ref{lem:management} give
$O(|\mathcal S_D|+1)$ management steps and service calls.
Each service costs at most $T$, and the terminal suffix costs $O(T+1)$.
\end{proof}

\subsection{Increasing-ticket instantiation}

\begin{lemma}[FAI charge set]\label{lem:fai-charge}
For \alg, $\mathsf{Seniors}(D)$ is a charge set for $D$.
Its size is $\beta_D\le\pc_D-1$.
\end{lemma}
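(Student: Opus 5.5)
We have to show that $\mathsf{Seniors}(D)$ is finite and contains every descriptor $x\ne D$ with $t_x<t_D$ that satisfies one of clauses (i)--(iii) of Definition~\ref{def:charge-set} during $D$'s pending suffix. The size claim is immediate. Each senior is a call active immediately after $\tau_D$ other than $D$, so $\beta_D$ is at most the point contention at that configuration minus one, which is at most $\pc_D-1$. Finiteness follows because there are at most $N$ processes, each with one outstanding call. Alternatively, it follows because only finitely many FAIs precede $\tau_D$. So the work is membership, and I will handle the three clauses separately. The common observation is that $t_x<t_D$ is equivalent to $\tau_x<\tau_D$, because FAI tickets increase. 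Hence every candidate $x$ already has its ticket at $\tau_D$, and it suffices to show that $\mathsf{Lock}(x)$ is still active immediately after $\tau_D$.

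For clause (ii), suppose $x$'s requester executes $\Cand.\mathsf{MinUpdate}(x)$ after $\sigma_D=\tau_D$. That step lies inside the $\mathsf{Lock}(x)$ call, and the call was invoked before $\tau_x<\tau_D$. So the call spans $\tau_D$ and is active immediately after it, which gives $x\in\mathsf{Seniors}(D)$.

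For clause (i), suppose $x$ is named by $\Cand$ or $\Owner$ at $\sigma_D$ and is not yet done. A descriptor can appear in $\Cand$ only through its own requester's MinUpdate, which happens after its FAI. It can appear in $\Owner$ only after having been in $\Cand$ (Assist installs a value it read from $\Cand$), or trivially after its FAI. So $\mathsf{Lock}(x)$ has been invoked before $\tau_D$. By the helpable-thunk-lock definition, $\mathsf{Lock}(x)$ returns only after $x.\mathit{done}$ is true, and $x$ is not done at $\tau_D$, so the call is still active.

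For clause (iii), suppose $x$ begins an owner tenure after $\sigma_D$. By the authorization rules (Table~\ref{tab:authorization}) and the handshake of Section~\ref{sec:races}, the installer read $x$ as not done after reading the empty generation. Moreover, by Lemma~\ref{lem:generation}, if $x$ had become done before the CAS, the generation would have changed and the CAS would fail. So $x$ is not done at the moment its tenure begins, which is after $\tau_D$. Monotonicity of $\mathit{done}$ then gives that $x$ is not done at $\tau_D$. Its call was invoked before $\tau_x<\tau_D$ and cannot return before $x$ is done, so it is active immediately after $\tau_D$.

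The main obstacle is clause (iii), where a stale helper might install $x$ long after $x$'s call finished. I will rule this out carefully by combining Lemma~\ref{lem:generation} with the owner-read/done-read/CAS ordering. The point to establish is that a successful installation implies $x$ is undone at the CAS instant, not merely at the earlier done-read. Lemma~\ref{lem:release} supports this: any completion of $x$ goes through a tenure of $x$, and that tenure must be released, advancing the generation, before the owner is empty again. The same reasoning, applied to "$x$ is not done at $\sigma_D$", covers clause (i) for the $\Owner$ case.
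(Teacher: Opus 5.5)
Your proof is correct and follows the paper's argument: the paper phrases it contrapositively (a smaller-ticket call that returned before $\tau_D$ is done, cannot publish again, and cannot be installed later because every delayed installation other than the current tenure expects an obsolete generation), which is your clause-by-clause check; your owner-read/done-read/CAS handshake for clause (iii) is the same generation argument, and both proofs get the cardinality bound from overlap at $\tau_D$. One minor point: the detour in clause (i) about how $x$ reaches $\Cand$ or $\Owner$ is unnecessary, since $t_x<t_D$ already places the invocation of $\mathsf{Lock}(x)$ before $\tau_D$.
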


\begin{proof}[Proof sketch]
Increasing FAI prevents a call invoked after $\tau_D$ from obtaining a
smaller ticket.
A smaller-ticket call that returned before $\tau_D$ cannot publish again,
and every delayed installation other than the current tenure expects an
obsolete owner generation.
Thus every chargeable smaller-ticket call is a senior of $D$.
All such calls overlap $\mathsf{Lock}(D)$ at $\tau_D$, giving the cardinality
bound.
See Appendix~\ref{app:progress} (``FAI senior cover'') for the residue
argument.
\end{proof}

\begin{theorem}[Retrospective wait-freedom]\label{thm:waitfree}
For every call $\mathsf{Lock}(D)$ and every asynchronous schedule,
$\mathsf{Lock}(D)$ returns after $O(\beta_D+1)$ lock-management steps and
$O((\beta_D+1)(T+1))$ local shared-memory steps.
\end{theorem}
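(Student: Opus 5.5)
The theorem follows by instantiating Theorem~\ref{thm:finite-charge} with the charge set supplied by Lemma~\ref{lem:fai-charge}. The plan is first to check that \alg, with ticket $t_D$ as the key $\rho(D)$ and $\sigma_D=\tau_D$, is an ordered service instance in the sense of Definition~\ref{def:ordered-service}. The key is unique and immutable because FAI tickets are distinct, and it is fixed at $\tau_D$, before the first $\Cand.\mathsf{MinUpdate}(D)$. The predicate $\mathsf{Done}(x)$ is $x.\mathit{done}$. It is monotone, and it first becomes true only inside $x.\mathsf{run}()$ invoked from \textsc{HelpOwner}$(x,g)$ after revalidating tenure $(x,g)$. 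The bounded-service contract of Section~\ref{sec:model} gives the required $T$-step bound for each service call. Finally, the requester's loop applies $\Cand.\mathsf{MinUpdate}(D)$ before every \textsc{Assist}, exactly as Definition~\ref{def:ordered-service} requires.

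Next I will invoke Lemma~\ref{lem:fai-charge}: $\mathsf{Seniors}(D)$ is a charge set for $D$, so we may take $m=|\mathcal S_D|=\beta_D$. Theorem~\ref{thm:finite-charge} then bounds the work after $\tau_D$ until response by $O(\beta_D+1)$ management steps and $O((\beta_D+1)(T+1))$ local steps. To cover the whole invocation I will apply the last clause of that theorem. Initialization consists of three descriptor-local writes, which are shared-memory accesses, and key assignment is one FAI, so together they cost $O(1)$ steps. The same asymptotic bounds therefore hold from invocation.

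Returning also needs an argument, because the bound is a bound on steps taken only while the call is still running. The requester is a single sequential process. If it took infinitely many steps without returning, it would exceed the finite bound, which contradicts the theorem. So after finitely many steps $D.\mathit{done}$ holds and the loop exits. The one subtlety is that a thunk run does not return while its descriptor is active, and a single $\mathsf{run}()$ call costs at most $T$ steps. Lemma~\ref{lem:terminal} covers the suffix after completion, and the loop guard then fails on the next test, after which \textsc{Lock} returns $D.\mathit{result}$.

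The only real content, beyond this bookkeeping, is the validity of $\mathsf{Seniors}(D)$ as a charge set. I am assuming it from Lemma~\ref{lem:fai-charge}. The delicate part there is clause (iii) together with the residues: a smaller-ticket call that completed before $\tau_D$ could, in principle, appear through a stale installation. I will note that the theorem's independence from later invocations rests on that lemma's observation that later invocations receive larger tickets. Such invocations therefore never enter $\mathcal S_D$, and they are absorbed by the larger-key compression of Lemma~\ref{lem:late-compression}.
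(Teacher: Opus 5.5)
Your proposal is correct and follows the paper's proof. Like the paper, you apply Theorem~\ref{thm:finite-charge} to the charge set $\mathsf{Seniors}(D)$ of Lemma~\ref{lem:fai-charge}, which has size $\beta_D$, take $x.\mathsf{run}()$ as the $T$-bounded service procedure, and add $O(1)$ steps for descriptor initialization and the FAI; your explicit check that \alg satisfies Definition~\ref{def:ordered-service} and your termination remark spell out bookkeeping the paper leaves implicit.
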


\begin{proof}
Apply Theorem~\ref{thm:finite-charge} to the senior set from
Lemma~\ref{lem:fai-charge}, whose size is $\beta_D$.
The service procedure is $x.\mathsf{run}()$, with the $T$-step contract from
Section~\ref{sec:model}; descriptor initialization and the FAI consume
$O(1)$ additional shared-memory steps.
\end{proof}

\begin{corollary}[Point-contention adaptivity]\label{cor:point}
Every call $\mathsf{Lock}(D)$ returns in $O(\pc_D)$ lock-management steps and
$O(\pc_D(T+1))$ local shared-memory steps.
\end{corollary}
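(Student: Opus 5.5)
The corollary should follow directly from Theorem~\ref{thm:waitfree} combined with the cardinality bound $\beta_D\le\pc_D-1$ recorded in Lemma~\ref{lem:fai-charge} (and in the discussion after the definition of $\mathsf{Seniors}(D)$). Since $\beta_D+1\le\pc_D$, we can replace $O(\beta_D+1)$ by $O(\pc_D)$ and $O((\beta_D+1)(T+1))$ by $O(\pc_D(T+1))$. Both expressions are monotone in $\beta_D$, and the hidden constants do not depend on the execution.

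The only substantive point to recheck is the inequality $\beta_D\le\pc_D-1$. I would argue it through a specific configuration. Take $s$ to be the configuration immediately after $\tau_D$. At $s$, every $E\in\mathsf{Seniors}(D)$ is active by definition. The call $\mathsf{Lock}(D)$ itself is also active at $s$, and it is not in $\mathsf{Seniors}(D)$ because $\tau_D\not<\tau_D$. Hence at least $\beta_D+1$ calls are active at $s$. This configuration occurs during $\mathsf{Lock}(D)$, so it lies in the range of the maximum defining $\pc_D$, which gives $\pc_D\ge\beta_D+1$.

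I also need $\pc_D\ge1$ so that $O(\pc_D)$ absorbs the additive constant. That holds because $D$ is active at $s$. No step here is a real obstacle. The one thing I would verify is that a configuration is well defined immediately after the FAI linearization point and falls within the call's interval, which the model's linearizable atomic FAI supplies.
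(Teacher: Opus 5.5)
Your proposal is correct and follows the paper's proof, which applies Theorem~\ref{thm:waitfree} and then uses $\beta_D+1\le\pc_D$. Your explicit check at the configuration immediately after $\tau_D$, where every senior and $\mathsf{Lock}(D)$ itself are active, is the same justification the paper gives when it first defines $\mathsf{Seniors}(D)$.
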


\begin{proof}
By construction, $\beta_D+1\le\pc_D$.
\end{proof}

\begin{corollary}[Independence from later arrivals]\label{cor:late}
If no smaller-ticket call is active when $D$ obtains its ticket, then $D$
becomes done and $\mathsf{Lock}(D)$ returns in $O(T+1)$ local shared-memory
steps, independent of calls invoked after the ticket step.
\end{corollary}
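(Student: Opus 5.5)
The plan is to specialize Theorem~\ref{thm:waitfree} to the case $\beta_D=0$, and then check separately that the returned bound contains no hidden dependence on later arrivals. The hypothesis says no smaller-ticket call is active immediately after $\tau_D$. By the definition of $\mathsf{Seniors}(D)$, this gives $\mathsf{Seniors}(D)=\emptyset$, so $\beta_D=0$. Lemma~\ref{lem:fai-charge} then says the empty set is a valid charge set for $D$, so $m=0$. Theorem~\ref{thm:finite-charge}, with $|\mathcal S_D|=0$, gives $O(1)$ management steps and $O(T+1)$ local shared-memory steps after $\sigma_D=\tau_D$. Initialization and the FAI add $O(1)$ steps, which gives the $O(T+1)$ total.

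Next I make the constant explicit, because the content of the corollary is that it is uniform over schedules. With $m=0$, Lemma~\ref{lem:open} leaves only the initial done-candidate residue as a possible guard opening, so $h\le 1$. Lemma~\ref{lem:late-compression} gives at most $4\cdot0+3=3$ larger-key bypasses, and Table~\ref{tab:charges} gives at most $7\cdot0+6=6$ charges in total. Each charged event costs $O(1)$ management steps plus at most one service call of cost $T$, by Lemma~\ref{lem:one-service}. Lemma~\ref{lem:terminal} bounds the terminal suffix by $O(1)$ management steps and one more service call. Multiplying out, the total is a fixed constant times $(T+1)$.

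The point needing care is the phrase ``independent of calls invoked after the ticket step.'' Every call invoked after $\tau_D$ performs its FAI after $\tau_D$, so it has a larger ticket and falls on the larger-key side. Lemmas~\ref{lem:one-larger}, \ref{lem:tail}, and~\ref{lem:late-compression} bound larger-key bypasses by counts that depend only on $m$, never on how many larger-key calls exist. There is a second subtlety: a call invoked before $\tau_D$ but still paused before its FAI. Such a call is not a senior, and by increasing FAI it also gets a larger ticket, so it is covered by the same larger-key bound. This is the step I expect to be the main obstacle, and I would handle it by pointing to the case analysis in Lemma~\ref{lem:fai-charge} rather than redoing it.

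Finally, termination itself (``$D$ becomes done and returns'') follows because the bound is finite. The loop exits only when $D.\mathit{done}$ holds. The step bound shows the invoking process cannot take unboundedly many steps while $D$ is pending, and under the model each step of that process happens in finite time when it is scheduled. So $D$ becomes done within the stated budget, and \textsc{Lock} returns $D.\mathit{result}$.
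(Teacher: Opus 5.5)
Your proposal is correct and takes the paper's route: the paper gives no separate proof, since the corollary is Theorem~\ref{thm:waitfree} with $\beta_D=0$ (the empty senior set is a charge set by Lemma~\ref{lem:fai-charge}, and Theorem~\ref{thm:finite-charge} then yields $O(T+1)$), and your explicit count of $h\le 1$ and at most $6$ charges agrees with the paper's lemmas. One small recalibration: calls paused before their FAI trivially receive larger tickets, so the delicate case you are really delegating to Lemma~\ref{lem:fai-charge} is the residue left by smaller-ticket calls that returned before $\tau_D$, namely a done token still in $\Cand$, a still-current owner tenure, and stale installations prepared by their helpers.
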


\begin{corollary}[Lock-resident space]\label{cor:space}
\alg uses three lock-resident abstract base objects.
Every operation additionally allocates a descriptor and idempotence log.
\end{corollary}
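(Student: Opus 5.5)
The statement is essentially a bookkeeping claim about Algorithm~\ref{alg:seniorlock}, so I will prove it by inspecting the pseudocode and the base-object declarations in Section~\ref{sec:model}. No progress or safety argument is needed.

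The first step is to enumerate every shared object that \textsc{Lock}, \textsc{Assist}, and \textsc{HelpOwner} access outside descriptor fields. These are exactly $\Next$ (the FAI on line~3), $\Cand$ (the $\mathsf{MinUpdate}$, $\mathsf{Read}$, and exact-token $\mathsf{CAS}$ operations), and $\Owner$ (its $\mathsf{Read}$ and pair $\mathsf{CAS}$ operations). I will check that every other shared access in these procedures targets a field of some descriptor: $D.\mathit{thunk}$, $D.\mathit{done}$, $D.\mathit{result}$, $D.\ticket$, or $x.\mathit{done}$ for an observed $x$. Each procedure is a few lines long, so the check is direct. I will then note that \textsc{Lock} allocates no other object, so the lock-resident state consists of exactly these three abstract objects, with the types and unbounded ticket and generation domains declared under ``Base objects''.

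The second step is to account for per-operation space. Each call allocates a fresh descriptor $D$, as required by the model. Any idempotence log the thunk needs is kept inside the descriptor, since the $T$-step contract explicitly counts ``descriptor-local logging''. Any other shared accesses a thunk run makes go to the protected object's own data, and those are the object's space, not the lock's. So every operation contributes one descriptor together with its idempotence log, and nothing else.

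The only point that needs care is the boundary between lock-resident and thunk-resident state. I have to make sure that $x.\mathsf{run}()$ does not implicitly use auxiliary shared state belonging to the lock. I will handle this by appealing to the service contract, which confines logging to the descriptor. I will also cite the scope remark at the end of Section~\ref{sec:model}, which says the count assumes unbounded tickets and generations and covers only lock-resident objects, so the corollary makes no claim about how those unbounded objects are themselves implemented.
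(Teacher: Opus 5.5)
Your proposal is correct and takes the same route as the paper. The paper gives no separate proof of Corollary~\ref{cor:space}; the count follows from the ``Base objects'' declaration of $\Next$, $\Cand$, and $\Owner$, from the model's fresh-descriptor and descriptor-local-logging assumptions, and from the scope remarks that exclude the registry, reclamation metadata, and the protected representation. Your inspection of Algorithm~\ref{alg:seniorlock} and your appeal to the $T$-step contract spell out that bookkeeping explicitly.
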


\subsection{An effectful retrospective universal construction}
\label{sec:universal}

Let $\mathcal O$ be a deterministic sequential object with state space
$Q$, initial state $q_0$, and total transition function
\[
  \delta(q,op)=(q',r).
\]
Store a private shared representation $R$ of the current state, initially
representing $q_0$, and associate one \alg instance $L$ with $R$.
For an invocation of $op$, allocate a fresh descriptor $D$ and install a
thunk $f_{D,op}$ that executes the sequential code for $op$ against $R$ and
returns its response.
Define the concurrent implementation by
\[
  \mathsf{Invoke}_{\mathcal O}(op)
  =L.\mathsf{Lock}(D,f_{D,op}).
\]
All invocations on this instance of $\mathcal O$ use the same $L$.
This is a universal construction in the standard sense
\cite{Herlihy1991}; we call the resulting object \objalg.
\objalg never copies $R$ into an atomic state transition, unlike a
classical Herlihy-style universal construction.

\begin{theorem}[Effectful retrospective universality]
\label{thm:universal}
Let $\mathcal O$ be any deterministic sequential object for which every
invocation can be packaged as a concurrently idempotent thunk
$f_{D,op}$ satisfying the uniform, helper-count-independent $T$-step service
contract of Section~\ref{sec:model}.
One \alg instance implements $\mathcal O$ linearizably and retrospective
wait-free.
For an invocation with descriptor $D$ and seniority $\beta_D$, its
invoking process performs
\[
  O(\beta_D+1)
  \quad\text{lock-management steps and}\quad
  O((\beta_D+1)(T+1))
\]
local shared-memory steps, independent of operations invoked after
$D$ obtains its ticket.
\end{theorem}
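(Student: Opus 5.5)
The plan splits the statement into two independent parts, linearizability and the step bound, and reduces both to results already stated for \alg.

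\textbf{Linearizability.} By Theorem~\ref{thm:safety}, \alg is a helpable thunk lock. So each $\mathsf{Invoke}_{\mathcal O}(op)$ returns only after $D.\mathit{done}$ holds, and the logical execution intervals of distinct descriptors are pairwise disjoint, each lying inside its own call.

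I will linearize each invocation at a single point inside its logical execution interval, say the last step of the first finished run. Disjointness makes these points totally ordered consistently with the order of the intervals. Real-time order is respected because each point lies within the call's invocation--response interval.

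It then remains to show that the responses match the sequential specification. I will argue by induction over the linearization order that, at the start of the $k$-th logical interval, $R$ represents $q_{k-1}$, where $q_0\xrightarrow{op_1}q_1\cdots$ is the sequential run of $\delta$.

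The inductive step uses Definition~\ref{def:idempotence}. Fix one descriptor at a time. Replace the physical steps $E|D$ by the subsequence $E'$ consistent with a single run; this preserves all other steps and responses and is indistinguishable under every continuation. After doing this for every descriptor, each thunk is a single sequential run of $op$'s code on $R$. No other thunk's effectual steps interleave with it, since the intervals are disjoint and late physical runs were removed as noneffectual. Only thunks access the private $R$, so that run sees exactly $q_{k-1}$. It therefore leaves $R$ representing $q_k$ and records the response $r_k$ of $\delta(q_{k-1},op_k)$ in $D.\mathit{result}$ before done is published. Lock returns that recorded value.

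One point needs care: the replacement must be applied to all descriptors simultaneously, or iteratively. Each replacement preserves non-thunk steps and validity, and steps of other descriptors' thunks count as ``other shared-data steps'' from that descriptor's viewpoint. So the replacements compose by induction on the number of descriptors.

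\textbf{Complexity.} Each invocation is literally one call $L.\mathsf{Lock}(D,f_{D,op})$ with a fresh descriptor. The thunk satisfies the uniform $T$-step contract by hypothesis. Theorem~\ref{thm:waitfree} therefore applies verbatim and gives $O(\beta_D+1)$ management steps and $O((\beta_D+1)(T+1))$ local steps.

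Independence from operations invoked after the ticket step follows from Corollary~\ref{cor:late} together with Lemma~\ref{lem:fai-charge}: the charge set consists only of seniors fixed at $\tau_D$. Taking the key to be the FAI ticket and $\sigma_{op}=\tau_D$ in Definition~\ref{def:seniority}, with $f(\beta)=(\beta+1)(T+1)$, gives retrospective wait-freedom.

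\textbf{Main obstacle.} I expect the hardest step to be the linearizability induction. Its main difficulty is justifying that helpers executing $op$'s sequential code concurrently, including stale runs that resume after a later tenure began, cannot corrupt $R$. The plan handles this entirely through the idempotence contract rather than any object-specific reasoning. I will state explicitly that the hypothesis ``packaged as a concurrently idempotent thunk'' includes validity against arbitrary other shared-data steps, which covers the other thunks' runs on $R$.
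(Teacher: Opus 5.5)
Your route matches the paper's proof. You use Theorem~\ref{thm:safety} to get disjoint logical intervals that respect real time. You then combine interval-by-interval idempotence replacement with induction on interval order to get legality. Finally, you apply Theorem~\ref{thm:waitfree} verbatim for the step bounds.

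\textbf{The gap is that you never handle pending invocations, which the definition of linearizability requires.} You linearize ``each invocation'' at the end of its first finished run. In a finite history, however, some calls have a thunk that has started but not finished. \objalg updates $R$ in place over many steps rather than by one atomic swap. So such a call may have left $R$ in an intermediate state that represents neither $q_{k-1}$ nor $q_k$. Your inductive claim that $R$ represents $q_{k-1}$ at the start of the $k$-th interval silently assumes that no partial run of an unfinished descriptor falls between two retained intervals. So does your remark that only late runs of finished thunks are removed. This situation cannot arise in a copying construction, and it needs its own argument.

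\textbf{The missing step, which the paper supplies, goes as follows.}
\begin{enumerate}
\item Form the completed history. Append the recorded $D.\mathit{result}$ to every pending call whose descriptor is done, and discard every other pending call.
\item A discarded descriptor that never began an owner tenure has taken no thunk step, because runs start only after revalidation in \textsc{HelpOwner}.
\item Suppose a discarded descriptor's tenure has begun. That tenure cannot end while the descriptor is not done (Lemma~\ref{lem:release}). The owner therefore stays occupied, and by Lemma~\ref{lem:generation} no later tenure, and hence no later logical interval, can begin.
\item Its partial execution is therefore a suffix after every retained interval, so no retained response can observe it.
\item If that descriptor later finishes, idempotence turns its runs into one more sequential transition. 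The argument is thus closed under continuations.
\end{enumerate}
Alternatively, you could treat an unfinished descriptor's interval as unbounded on the right and appeal to disjointness. That reading must itself be justified by Lemma~\ref{lem:release}.

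A minor point: Corollary~\ref{cor:late} is only the $\beta_D=0$ case. Independence from later arrivals actually comes from the bound in Theorem~\ref{thm:waitfree} depending only on $\beta_D$. Later invocations cannot enlarge $\beta_D$ because they receive larger tickets.
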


\begin{proof}[Proof sketch]
By Theorem~\ref{thm:safety}, completed logical thunk intervals are
nonoverlapping, lie within their invocations, and respect real time.
Concurrent idempotence replaces every interval's physical runs by one run of
the corresponding sequential operation, so interval order is a legal
linearization of $\mathcal O$ with the recorded responses.
Done pending calls can be completed with those responses; an unfinished
tenure cannot be followed by a later logical interval, so all other pending
calls can be omitted.
Each target operation is one $\mathsf{Lock}(D,f_{D,op})$ call, and
Theorem~\ref{thm:waitfree} gives the claimed bounds with the same ticket,
senior set, and $T$.
See Appendix~\ref{app:universal} for the full safety and completion
argument.
\end{proof}

The logging transformation of Ben-David, Blelloch, and Wei supplies
concurrently idempotent thunks for ABA-free code whose shared accesses use
the transformed primitives \cite{BenDavidBlellochWei2022}; whenever the
transformed operation has a uniform $T$ bound, it instantiates
Theorem~\ref{thm:universal}.
Universality concerns operation semantics.
Total space includes the target representation, live descriptors, and
idempotence logs; Corollary~\ref{cor:space} counts the lock-resident objects.

\section{Ablations: republication and versioning}\label{sec:ablation}

This section states counterexamples for one-time publication and unversioned
ownership; Appendix~\ref{app:ablation} gives the complete schedules.

\subsection{One-time publication is not wait-free}

Consider the variant in which a request executes minimum update only once and
then repeatedly calls \textsc{Assist}.

\begin{proposition}[Republication]\label{prop:republication}
The one-time-publication variant has an infinite execution with maximum point
contention two in which a victim takes infinitely many steps and never
finishes.
\end{proposition}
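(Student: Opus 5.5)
The plan is to build an explicit schedule in which a delayed smaller-ticket request permanently erases the victim's single publication. This is exactly the failure that repeated self-publication repairs (Section~\ref{sec:design}). Nothing from the progress section is needed. The argument is a direct trace of Algorithm~\ref{alg:seniorlock}, with the loop body changed so that $\Cand.\mathsf{MinUpdate}(D)$ runs once before the loop and \textsc{Assist} is called repeatedly.

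The schedule uses two calls, $E$ and $D$.
\begin{enumerate}
\item Start $\mathsf{Lock}(E)$. Let it execute its FAI, obtaining ticket $0$, and pause it before its single minimum update.
\item Start $\mathsf{Lock}(D)$. It obtains ticket $1$, performs its one $\Cand.\mathsf{MinUpdate}(D)$, so $\Cand=D$, and pauses before its first \textsc{Assist}.
\item Run $E$ solo. Its minimum update replaces $D$ by $E$, since $t_E<t_D$, consistent with Lemma~\ref{lem:candidate}. In \textsc{Assist} it reads $\Owner=(\nullowner,0)$, reads $\Cand=E$ not done, and installs $(E,0)$. In \textsc{HelpOwner} it revalidates, runs $E$'s thunk to completion, resets $\Cand$ from $E$ to $\INF$ by the exact-token CAS, and releases $\Owner$ to $(\nullowner,1)$. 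It then exits its loop and returns.
\end{enumerate}
Now $\Cand=\INF$, $\Owner=(\nullowner,1)$, and $D$ is not done. No configuration so far has more than the two calls $D$ and $E$ active.

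From this point only $D$ is scheduled, forever. Each iteration of $D$ tests $\neg D.\mathit{done}$, which is true, and calls \textsc{Assist}. There it reads the empty owner $(\nullowner,1)$, reads $\Cand=\INF$, and returns. No step of this iteration changes any shared object. So the configuration after each iteration equals the one before it, and by induction on iterations $D$ takes infinitely many steps. No owner tenure for $D$ ever begins, so $D.\mathit{done}$ stays false and $D$ never returns. Point contention is one in this suffix, so the maximum over the whole execution is two.

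The only delicate point is checking that nothing in the modified algorithm can re-insert $D$. Candidates enter $\Cand$ only through minimum updates, and $D$ has already spent its single one; $E$ is finished, and no other call exists. So $\Cand$ stays $\INF$, and the owner CAS is never even attempted. The execution is also fair: both processes take steps, and the only one that is ever stuck is the victim.

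If one prefers an execution in which $D$ actually performs helping work, in the spirit of Section~\ref{sec:design}, the suffix can instead interleave sequential later calls $Y_1,Y_2,\dots$. Each $Y_i$ publishes, is installed by $D$'s \textsc{Assist}, and is served and returned before $Y_{i+1}$ starts. This keeps point contention at two while $D$ helps newcomer after newcomer, but the solo suffix already suffices for the statement.
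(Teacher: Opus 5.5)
Your proof is correct, but it takes a simpler route than the paper's. The alternative you sketch in your last paragraph is essentially the paper's schedule.

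The paper uses the same prefix: $E$'s delayed publication displaces $D$'s only token, and $D$ helps complete $E$. It then keeps the victim busy in rounds. In each round it starts a fresh larger-ticket call $Y_i$, publishes and installs it just before $D$'s owner read, lets $D$ serve and release it, and returns $Y_i$ before $Y_{i+1}$ starts.

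You instead observe that once $E$'s exact reset leaves $\Cand=\INF$, no step of the variant can ever write $D$ into $\Cand$ again. So no tenure of $D$ can begin, and $D$ spins forever even when it runs alone.

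Your route gives a stronger conclusion with less machinery. The variant is not even obstruction-free for the victim, and the orphaning is permanent under every continuation.

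The paper's route exhibits the failure mode it is arguing about. The victim performs unbounded service for an unbounded stream of non-senior later arrivals at point contention two, mirroring the destructive-replacement example of Section~\ref{sec:design}. That schedule also does not depend on the victim lacking any fallback path to ownership, because the owner is occupied by a newcomer whenever $D$ reads it. This is why the paper can reuse ``the same schedule'' to argue that a descriptor-local selected bit does not replace republication. A patch giving an orphaned requester a direct path to ownership when it sees an empty owner would defeat your solo suffix, but not the paper's schedule.
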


\begin{proof}[Proof sketch]
A delayed smaller-ticket publication removes $D$'s sole candidate token.
Thereafter, start one larger-ticket call $Y_i$ per round immediately before
$D$ reads the owner, let $D$ complete it, and return it before starting
$Y_{i+1}$.
The schedule repeats at point contention two.
\end{proof}

The same schedule explains why a descriptor-local selected bit does not
replace republication.
After a stale read marks an absent $D$ selected, the bit records an obligation
but imposes no order on later owner installations.
In \alg the owner CAS is the selection event, and every still-pending caller
restores the shared ticket guard first.

\subsection{An untagged owner admits quadratic bypass}

Let \textsc{Untagged-SeniorLock} be Algorithm~\ref{alg:seniorlock} with
$\Owner\in\{\nullowner\}\cup\{D\}$ and CAS from an untagged null pointer.
All other lines, including the done check, are unchanged.

\begin{proposition}[Stale-cohort lower bound]\label{prop:untagged}
For every $q\ge1$, \textsc{Untagged-SeniorLock} has an execution with point
contention at most $2q+1$ in which one victim performs
$\Omega(q^2)$ lock-management steps.
\end{proposition}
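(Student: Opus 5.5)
We construct an explicit schedule of $q$ rounds, generalizing the unversioned-cohort scenario of Section~\ref{sec:design}. In round $i$ the victim $D$ is forced to encounter about $q$ distinct owner tenures, each installing an already-done descriptor, and $D$ pays a constant number of management steps per tenure. Summed over $q$ rounds this gives $\Omega(q^2)$ lock-management steps.

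\textbf{Cast of processes.} Fix a victim $D$ with a large ticket. Use $q$ \emph{cohort} processes $C_1,\dots,C_q$, with larger tickets than $D$, which stay active throughout. Use $q$ smaller-ticket \emph{senior} processes $S_1,\dots,S_q$ whose calls are already active when $D$ takes its ticket, so they may publish later. Together with $D$ this gives point contention at most $2q+1$.

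\textbf{One round, to be formalized and then iterated.}
\begin{enumerate}
\item Before $D$'s guard is effective, let some larger-ticket descriptor $x$ sit in $\Cand$ (for round $1$ we make $\Cand=x$ initially). Let every cohort process execute \textsc{Assist}: each reads an empty untagged owner, reads $\Cand=x$, reads $x$ not done, and pauses just before $\Owner.\mathsf{CAS}(\nullowner,x)$.
\item Let one cohort member install $x$ and run it to completion and release, while $\Cand$ still holds $x$ or has been replaced by $D$.
\item With the owner null again, release the paused cohort CASes one at a time. Each succeeds in installing the done $x$, because the untagged expected value $\nullowner$ is reachable again.
\item Between consecutive stale installations, schedule $D$ to run one \textsc{Assist} iteration. $D$ reads the occupied owner and enters \textsc{HelpOwner}, observes $x$ done, performs the candidate CAS and the owner release CAS, then loops and republishes. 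Each stale tenure thus costs $D$ $\Theta(1)$ management steps and is fully released by $D$, leaving the owner null for the next stale CAS.
\item After the cohort is exhausted, use one senior $S_i$ to publish (a smaller ticket than $D$, so it replaces $D$ in $\Cand$), complete, and reset $\Cand$ to $\INF$. This opens the guard.
\item Before $D$ republishes, let a fresh larger-ticket descriptor $x_{i+1}$ enter $\Cand$, and re-prime the cohort: each $C_j$ returns to a new \textsc{Assist} iteration and pauses before its CAS on $x_{i+1}$.
\end{enumerate}

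\textbf{Sources for $x_{i+1}$.} Since the $C_j$ must remain active with point contention bounded, $x_{i+1}$ should be a cohort member's own descriptor that is republished while not done. Alternatively, we add one extra short-lived process whose call returns between rounds, adjusting the constants. Because each $C_j$ helps but $x_{i+1}$ is completed only by the first installer, every $C_j$ stays pending and keeps looping. This satisfies the ``remain simultaneously active'' requirement. The main obstacle is making this cohort re-priming consistent: each $C_j$ must indeed stay undone across all rounds. I would try to give each $C_j$ a thunk that is never chosen, keeping its own ticket larger than the $x$'s used. If that fails, I would let cohort members also serve as the $x$'s in turn: round $i$ uses $x=C_i$'s descriptor and the cohort is $\{C_j: j\ne i\}$, which still has size $q-1$.

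\textbf{Verification and conclusion.} I would check each step against Algorithm~\ref{alg:seniorlock}, in particular the required read order (owner read $\prec$ done read $\prec$ CAS) in \textsc{Assist}. Then I would count: $q$ rounds, each with $\Omega(q)$ victim-visible stale tenures, each costing $D$ at least one management step. This gives $\Omega(q^2)$ steps, while by construction at most $D$, the $q$ cohort members, and $q$ seniors are ever active at once, so point contention is at most $2q+1$.
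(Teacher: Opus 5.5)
You have the paper's mechanism: $q$ smaller-ticket calls open the guard $q$ times. Before each republication of $D$, a cohort of larger-ticket requesters prepares CASes from the untagged $\nullowner$ on one candidate, so every CAS after the first reinstalls a done descriptor that $D$ must clear.

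\textbf{Gap in your lead variant.} The variant you lead with (a persistent cohort plus a fresh $x_{i+1}$ each round) fails at step~6 as written.
\begin{itemize}
\item A cohort member can re-enter \textsc{Assist} only through the loop body, so it first executes $\Cand.\mathsf{MinUpdate}(C_j)$.
\item The $C_j$ stay active throughout, so $x_{i+1}$ is invoked after all of them took FAI tickets and has the larger ticket.
\item Hence these updates overwrite $x_{i+1}$, and the cohort reads the smallest cohort token instead.
\item The first installer then completes that cohort member, whose requester returns at its next loop test. So ``every $C_j$ stays pending'' is false.
\end{itemize}
Your remedy, keeping cohort tickets larger than every $x$, is impossible for fresh FAI tickets. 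Pre-ticketing the $x$'s would instead leave extra calls pending against the $2q+1$ budget.

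The variant can be rescued by ordering the steps differently. Let each $C_j$ republish while $D$'s smaller token sits in $\Cand$, where the update is a no-op, and pause it just before its owner read. Only then let $S_i$ displace $D$, finish, and reset $\Cand$, and publish $x_{i+1}$ into the empty candidate. Your proposal does not contain this argument.

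\textbf{Your fallback.} Using the cohort's own descriptors as candidates is exactly the paper's proof. There, $h_i$ is the minimum remaining helper, which is what $\Cand$ holds once all remaining helpers have republished. However, your count is wrong. By round $i$ the descriptors $C_1,\dots,C_{i-1}$ are done, and the \texttt{while} test makes their requesters return, so they cannot be re-primed. The round-$i$ cohort is therefore $\{C_j:j\ge i\}$, of size $q-i+1$, not $\{C_j:j\ne i\}$ of size $q-1$. Note that the requester of $C_i$ may itself read its own token and CAS.

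The asymptotics survive. If $D$ also services the first installation of each round, as in the paper, round $i$ yields $q-i+1$ victim-visible tenures, for a total of $q(q+1)/2=\Omega(q^2)$. Even with your step~2, where a cohort member services the first installation, you get $q(q-1)/2$. In both cases point contention is $2q+1$ and no extra process is needed. Commit to this version with the corrected count.
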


\begin{proof}[Proof sketch]
Use $q$ smaller-ticket calls to open the guard in $q$ phases and $q$
larger-ticket helpers to prepare an installation cohort before each
republication of $D$.
Without a generation, all remaining stale CAS operations install the same
completed descriptor in turn.
Phase $i$ contributes $q-i+1$ installations, so the victim encounters
\[
  q+(q-1)+\cdots+1=\frac{q(q+1)}2
\]
tenures at point contention at most $2q+1$.
\end{proof}

\paragraph{Algorithm-specific tightness.}
For \alg, let $b$ senior calls publish one at a time and schedule $D$ to
run each $T$-step thunk before its own.
Then $\beta_D=b$ and $D$ performs $(b+1)T$ local thunk steps, showing that
the service-work term is asymptotically tight for \alg.

\section{Implementation model and limitations}\label{sec:scope}

The theorem treats MinUpdate, FAI, and tagged CAS as linearizable base-object
steps and assumes the representation conditions below.

\paragraph{Primitive and hardware.}
Arm LDUMIN, RISC-V AMOMIN, and CUDA \texttt{atomicMin} expose integer minimum
RMWs on supported targets
\cite{ArmLDUMIN2024,RISCVUnprivileged2024,NVIDIACUDA2026}.
The C++26 interface may use a CAS loop \cite{GrantKozickiNorthover2024}, but
Proposition~\ref{prop:cas-loop} shows that substitution is not wait-free.
The $N+1$-register CAS upper bound classifies register count, not
retrospective running time.

\paragraph{Representation and reclamation.}
The abstract candidate is the lexicographic token
$(D.\ticket,D.\mathit{id})$, equivalently an integer ticket plus an
immutable ticket-to-descriptor registry.
For at most $M$ calls, tickets index an $O(M)$ registry and the owner pair
uses $2\lceil\log_2(M+1)\rceil$ bits.
The unbounded-execution theorem assumes nonwrapping tickets and generations;
bounded concurrent timestamps address bounded key spaces under unbounded
arrivals \cite{IsraeliLi1993,DolevShavit1997}.
A delayed helper can retain a descriptor after both objects drop it, so
long-lived reuse needs reclamation metadata, for which hazard pointers
\cite{Michael2004} are one standard mechanism.
Total memory also includes the registry, descriptors, thunk logs, and any
reclamation metadata excluded by Corollary~\ref{cor:space}.

\paragraph{Memory ordering.}
In a release/acquire model, descriptors and done are release-published and
acquire-read; owner installation is acquire-release and cleanup is release.
Generation changes alone do not publish thunk effects.

\paragraph{Limits.}
The interface serializes one always-executed thunk and excludes multi-lock
acquisition, abort, nesting, strict FCFS order, and physical ownership
intervals; tickets order only published candidates.

\section{A conditional-RMW space separation}\label{sec:separation}

Because the time bound counts MinUpdate as one base step, this section
characterizes the base-location cost of implementing resettable min-CAS from
conditional RMW registers.

A \emph{resettable min-CAS register} stores either $\INF$ or a uniquely keyed
token and supports
\[
  \begin{gathered}
  \mathsf{Read}(),\qquad \mathsf{CAS}(x,y),\\
  \mathsf{MinUpdate}(x): C\gets\min(C,x).
  \end{gathered}
\]
Algorithm~\ref{alg:seniorlock} uses CAS only for exact reset
$\mathsf{CAS}(x,\INF)$.
For fixed input $x$, $\mathsf{MinUpdate}(x)$ changes every state larger than
$x$.
In contrast, a conditional RMW has, for fixed input, only one current value
on which it changes the register.
Fich, Hendler, and Shavit call a register supporting reads, writes, and such
operations a \emph{read-write-conditional register}; CAS is the canonical
example \cite{FichHendlerShavit2004}.
Their model permits unbounded register values, as does this section.
By \emph{black-box implementation} we mean a standalone linearizable object:
clients use only its interface, and its internal base registers are disjoint
from client registers.

\subsection{Why the direct CAS loop is insufficient}

The sequential effect of $\mathsf{MinUpdate}$ can be reproduced by repeatedly
reading $c$ and returning if $c\le x$.
Otherwise it attempts
\[
  \mathsf{CAS}(c,x)
\]
and retries after failure.
The resulting operation is not wait-free.

\begin{proposition}[CAS-loop starvation]\label{prop:cas-loop}
\leavevmode\par
After replacing $\mathsf{MinUpdate}$ by this retry loop,
Algorithm~\ref{alg:seniorlock} has an execution with $\beta_D=0$ and maximum
point contention two in which one invocation by $D$ takes infinitely many
steps without finishing its first minimum update.
\end{proposition}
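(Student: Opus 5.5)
We construct an explicit adversarial schedule. A victim $D$ is pinned inside the retry loop of its first $\mathsf{MinUpdate}$. A single larger-ticket helper is active at each moment, so contention never exceeds two, and the helpers make the candidate register change between $D$'s read and its CAS in every round. Throughout, $D$ is invoked when no other call is active and takes its ticket then, so $\mathsf{Seniors}(D)=\emptyset$ and $\beta_D=0$. Every later call $Y_i$ is invoked after $\tau_D$ and so gets a larger ticket $t_{Y_i}>t_D$. It might seem that $D$'s CAS can only be defeated by a smaller value, since larger-ticket updates of $Y_i$ would return immediately once $\Cand\le t_D$. But $D$ has not yet succeeded, so $\Cand$ is never below $t_D$ from $D$'s publication, and larger-ticket values can be installed and removed freely.

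\textbf{Round structure.} Start with $\Cand=\INF$ and $\Owner=(\nullowner,0)$. In round $i$:
\begin{enumerate}
\item $D$ reads $\Cand=c$, where $c$ is either $\INF$ or a done larger-ticket token; in either case $c>t_D$, so $D$ prepares $\mathsf{CAS}(c,D)$ and pauses.
\item $Y_i$ is invoked, takes its ticket, and runs its retry-loop $\mathsf{MinUpdate}$. It reads $c$. Since $t_{Y_i}$ is smaller than $\INF$ and than any earlier $Y_j$'s ticket? No: tickets increase, so $t_{Y_i}>t_{Y_j}$. To keep the argument clean, we arrange that $c=\INF$ at the start of every round. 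Then $Y_i$'s CAS$(\INF,Y_i)$ succeeds, and $Y_i$ runs \textsc{Assist} solo: it installs itself as owner, runs its thunk, resets $\Cand$ to $\INF$ by exact-token CAS, releases the owner with the generation bumped, and returns.
\item $D$'s pending $\mathsf{CAS}(\INF,D)$ is now scheduled. To make it fail, $Y_i$ must still be resident when $D$'s CAS executes. So we order the steps as follows: $Y_i$ performs its successful CAS, then $D$'s CAS runs and fails, then $Y_i$ completes and resets $\Cand$ to $\INF$ and returns, then round $i+1$ begins with $D$ rereading $\INF$.
\end{enumerate}
Each round costs $D$ two steps, a read and a failed CAS. $D$ never exits the loop, and at every configuration only $D$ and at most one $Y_i$ are active, giving maximum point contention two.

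\textbf{Checks.} $Y_i$ runs solo after $D$'s failed CAS. Its \textsc{Assist} reads an empty owner, reads $\Cand=Y_i$ not done, installs itself, runs, and completes within $O(T)$ steps. The reset returns $\Cand$ to $\INF$, so the invariant at the start of each round holds. No step of $D$ alters $\Cand$ or $\Owner$, because all of $D$'s CASes fail, so $Y_i$'s solo run is indeed unobstructed. The main obstacle is the subtle point above: a naive schedule in which $Y_i$ finishes entirely before $D$'s CAS lets $D$ succeed, because the value returns to $\INF$, which is the ABA case. The fix is to interleave $D$'s failing CAS during $Y_i$'s residency, while $\Cand=Y_i\neq\INF$. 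Also, $Y_i$'s own retry loop never faces interference, because $D$ only attempts CAS with expected value $\INF$, and $D$'s CAS occurs after $Y_i$'s installation. Thus the execution is valid and infinite, proving the proposition.
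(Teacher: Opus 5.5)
Your final schedule is correct and is essentially the paper's construction: one fresh larger-ticket call $Y_i$ per round, returned before $Y_{i+1}$ starts, keeps point contention at two and $\beta_D=0$, while $D$'s read/CAS window straddles a change of $\Cand$. The only difference is which change is straddled. You straddle the publication $\INF\to Y_i$, which forces you to defer the rest of $Y_i$ until after $D$'s failed CAS to avoid the $\INF$-ABA you identify; the paper instead lets $D$ read $Y_i$ and straddle its exact reset to $\INF$, so each $Y_i$ can run to completion solo because a finished token is never reinstalled.

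Do clean up step~2, which as written still has $Y_i$ finish before $D$'s CAS and so contradicts the corrected ordering in step~3.
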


\begin{proof}
In round $i$, let $D$ read a larger-ticket candidate $Y_i$ and pause before
$\mathsf{CAS}(Y_i,D)$.
Complete $Y_i$, including its exact reset of the candidate to $\INF$; the CAS
by $D$ then fails.
Return $Y_i$, start a fresh larger-ticket call $Y_{i+1}$, let it publish, and
repeat.
Only $D$ and the current $Y_i$ are active.
All $Y_i$ obtain their tickets after $D$, so $\beta_D=0$, but the retry loop
of $D$ never returns.
\end{proof}

Thus direct retry-loop substitution does not preserve wait-freedom.
The next theorem gives a space lower bound for any wait-free black-box
implementation.

\subsection{Locally finite keys}

To attribute the separation to minimum update, the reduction uses static
locally finite keys instead of \alg's fetch-and-increment object.
For process $p$'s $k$th operation, fix the key
\[
  \rho(p,k)=\pi(p,k)
  =\frac{(p+k)(p+k+1)}2+k,
\]
the computable Cantor injection; each identity $(p,k)$ is used once.
For an operation $X$, define
\[
  L_X=\{Y:\rho(Y)<\rho(X)\}.
\]
This set contains future as well as present operations, but is finite:
$|L_X|\le\rho(X)$.

\begin{corollary}[Locally finite keys]\label{lem:finite-key}
In an ordered service instance using the static keys above, operation $X$
finishes after
\[
  O(|L_X|+1)
\]
management steps and
$O((|L_X|+1)(T+1))$ local shared-memory steps of its invoking process.
\end{corollary}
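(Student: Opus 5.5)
The plan is to instantiate Theorem~\ref{thm:finite-charge} with the charge set $\mathcal S_X=L_X\setminus\{X\}$; strictly, $L_X$ as defined already excludes $X$, since $\rho(X)\not<\rho(X)$. The ordered service instance (Definition~\ref{def:ordered-service}) allows an arbitrary unique immutable key fixed before the first publication. The static key $\rho(p,k)=\pi(p,k)$ meets this requirement: it is computable locally from the process identity and its operation counter, so we may take $\sigma_X$ to be the invocation step of $X$, and no shared access is needed. Uniqueness follows because $\pi$ is the Cantor pairing injection and each pair $(p,k)$ is used by at most one operation.

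The first step is to show that $L_X$ is a charge set in the sense of Definition~\ref{def:charge-set}. That definition requires that $\mathcal S_X(\alpha)$ be finite and contain every descriptor $x\neq X$ with $\rho(x)<\rho(X)$ that meets one of conditions (i)--(iii). Here $L_X$ contains \emph{every} operation with a smaller key, whether past, present, or future, so it contains those satisfying (i)--(iii) trivially. Finiteness holds because $\pi$ is an injection into $\mathbb N$: at most $\rho(X)$ identities $(p,k)$ have $\pi(p,k)<\rho(X)$. Hence $|L_X|\le\rho(X)<\infty$. Unlike Lemma~\ref{lem:fai-charge}, no residue argument about returned calls or stale generations is needed, because we overcharge by including all smaller keys.

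Next, apply Theorem~\ref{thm:finite-charge} with $m=|L_X|$. This yields $O(|L_X|+1)$ management steps and $O((|L_X|+1)(T+1))$ local shared-memory steps after $\sigma_X$. Key assignment takes no shared-memory steps, and descriptor initialization takes $O(1)$. The final clause of Theorem~\ref{thm:finite-charge} therefore extends the bound to the whole invocation.

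The only point I expect to need care is whether the proof of Theorem~\ref{thm:finite-charge} uses any property of FAI tickets beyond the charge-set definition. For example, the two constant residues (an initially resident done candidate and an already-current tenure) should not depend on ticket monotonicity. I would check that the supporting lemmas (\ref{lem:establish}--\ref{lem:terminal}) refer only to $\rho$, the guard, and $\mathcal S_D$, as the section's preamble asserts. The corollary then follows directly.
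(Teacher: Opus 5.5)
Your proposal is correct and matches the paper's proof: $L_X$ contains every smaller-key operation, so it satisfies Definition~\ref{def:charge-set} trivially. It is also finite, with $|L_X|\le\rho(X)$, so Theorem~\ref{thm:finite-charge} yields both bounds. Your extra checks, that the key is computed locally and that the supporting lemmas use no FAI-specific reasoning, agree with the paper's remark that the progress argument relies only on monotone completion and bounded service.
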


\begin{proof}
The set $L_X$ is a charge set by Definition~\ref{def:charge-set}, including
smaller-key operations that arrive after $X$.
Apply Theorem~\ref{thm:finite-charge}.
\end{proof}

The FAI instance restricts the relevant smaller-key set to seniors.
Static keys give up that adaptivity but retain wait-freedom.

\subsection{A long-lived counter from three registers}

Fix $N$ processes.
In addition to one resettable min-CAS register $\Cand$, use a tagged CAS
register $\Owner$, initially $(\nullowner,0)$, and one CAS register
$\mathit{Reply}$.
The latter stores the entire vector
\[
  \mathit{Reply}[p]=(\mathit{seq},\mathit{result})
\]
as one unbounded value; initially every sequence number is zero.
Process $p$ keeps a private invocation number $k_p$, initially zero.
Process identifiers are never reused and each process has at most one
outstanding FAI.
Inductively, immediately before its $k$th invocation,
$\mathit{Reply}[p].\mathit{seq}=k-1$; operation $k+1$ is not announced before
operation $k$ responds.

\begin{algorithm}[t]
\caption{A long-lived FAI counter from resettable min-CAS.}
\label{alg:mincas-counter}
\begin{algorithmic}[1]
\Procedure{FAI}{}
  \State $k_p\gets k_p+1$;\quad
         $X\gets(\rho(p,k_p),p,k_p)$
  \While{$\neg\Call{Done}{X}$}
    \State $\Cand.\mathsf{MinUpdate}(X)$
    \State \Call{Assist}{}
  \EndWhile
  \State \Return $\mathit{Reply}.\mathsf{Read}()[p].\mathit{result}$
\EndProcedure
\Statex
\Function{Done}{$X=(\_,p,k)$}
  \State $A\gets\mathit{Reply}.\mathsf{Read}()$
  \State \Return $A[p].\mathit{seq}\ge k$
\EndFunction
\Statex
\Procedure{Complete}{$X=(\_,p,k),g$}
  \Repeat
    \State $A\gets\mathit{Reply}.\mathsf{Read}()$
    \If{$A[p].\mathit{seq}\ge k$}
      \State \Return
    \EndIf
    \State $A'\gets A$ with
           $A'[p]\gets(k,g)$
  \Until{$\mathit{Reply}.\mathsf{CAS}(A,A')$ succeeds}
\EndProcedure
\end{algorithmic}
\end{algorithm}

The \textsc{Assist} and \textsc{HelpOwner} procedures are those in
Algorithm~\ref{alg:seniorlock}, after replacing every test of
$x.\mathit{done}$ by $\mathsf{Done}(x)$ and the call
$x.\mathsf{run}()$ by $\mathsf{Complete}(x,g)$.
Thus the counter client contains no FAI object.

\begin{lemma}[Bounded completion]\label{lem:counter-complete}
\leavevmode\par
Every call to $\mathsf{Complete}(X,g)$ takes $O(1)$ shared-memory steps.
\end{lemma}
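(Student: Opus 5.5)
The plan is to bound the number of failed iterations of the repeat loop in $\mathsf{Complete}(X,g)$, where $X=(\_,p,k)$. Each iteration costs two steps: one read of $\mathit{Reply}$ and one CAS. The key point is that a CAS by the caller fails only when $\mathit{Reply}$ changed between its read and its CAS. So it suffices to show that $\mathit{Reply}$ can change only $O(1)$ times while a single call to $\mathsf{Complete}(X,g)$ is running.

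First, I will pin down what the successful CASes on $\mathit{Reply}$ can be.

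\textbf{Which CASes succeed.} Every successful CAS is performed inside some $\mathsf{Complete}(Y,g')$ call. Such a CAS raises $\mathit{Reply}[q].\mathit{seq}$ to the invocation number of $Y$, where $Y$ is an operation of process $q$. Moreover, $\mathsf{Complete}(Y,g')$ is reached only through $\textsc{HelpOwner}(Y,g')$, after revalidating that $\Owner=(Y,g')$.

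\textbf{At most one successful CAS per descriptor.} Take any execution of $\mathsf{Complete}(Y,\cdot)$. If its CAS succeeds, then $\mathsf{Done}(Y)$ holds from then on, because sequence numbers only increase. Every later iteration of any $\mathsf{Complete}(Y,\cdot)$ therefore reads $\mathit{seq}\ge$ the invocation number of $Y$ and returns without writing.

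\textbf{Bounding interfering successes during a call for $X$.} While $\mathsf{Complete}(X,g)$ runs, its caller validated tenure $(X,g)$ beforehand. I will use Lemma~\ref{lem:one-tenure} and Lemma~\ref{lem:release}, which carry over to the $\mathsf{Done}$-predicate instance. Together they say that a later tenure $(Y,g'')$ with $g''>g$ begins only after $X$ is done, and once $X$ is done the caller's next iteration returns.

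This leaves two sources of successful CASes that can interfere before $X$ is done.
\begin{enumerate}
\item Successful CASes for $X$ itself. There is at most one, and that one makes $X$ done.
\item CASes by stale helpers of earlier tenures $(Y,g')$ with $g'<g$, which validated and then paused inside $\mathsf{Complete}$.
\end{enumerate}

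The second source is the main obstacle. By Lemma~\ref{lem:release}, each earlier tenure $Y$ was released only after $\mathsf{Done}(Y)$. So the unique successful CAS for $Y$ happened before tenure $(X,g)$ began, and any stale helper of $Y$ that retries must fail its CAS or read done.

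The catch is that a stale helper of $Y$ may still hold an old snapshot $A$ with $A[q].\mathit{seq}<$ the invocation number of $Y$. Its CAS then fails, because $\mathit{Reply}$ now differs from $A$. That failure is harmless to $X$, since a failed CAS does not change $\mathit{Reply}$. Hence no stale helper ever changes $\mathit{Reply}$ after $Y$'s success.

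\textbf{Conclusion.} During $\mathsf{Complete}(X,g)$, $\mathit{Reply}$ changes at most once, namely the success that makes $X$ done. The caller therefore performs at most two loop iterations plus a final read, which is $O(1)$ shared-memory steps.

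The delicate step is justifying that no operation other than $X$ can succeed during the call. I will argue this through generation uniqueness (Lemma~\ref{lem:generation}): it excludes any concurrent success for a different descriptor, while the pre-$g$ descriptors already succeeded once and cannot succeed again.
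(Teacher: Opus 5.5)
Your proof is correct and follows the paper's argument. No distinct tenure begins while $(X,g)$ is current, and a stale helper of an earlier tenure holds a reply vector that can never recur because sequence fields only increase, so a failed CAS for unfinished $X$ means $X$ was completed and one reread returns. One small fix: your conclusion should say that the only change of $\mathit{Reply}$ possible while $X$ is unfinished is the one completing $X$, since $\mathit{Reply}$ may change arbitrarily often after $X$ is done while the caller is paused, and that restricted claim is exactly what your iteration count uses.
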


\begin{proof}
No different tenure begins while $(X,g)$ is current.
Every delayed old helper expects a reply vector preceding the monotone update
that enabled its tenure's release, so its CAS fails.
Thus a CAS failure for unfinished $X$ means another helper completed $X$;
one reread returns.
Every successful reply CAS advances exactly one sequence field from $k-1$ to
$k$, so the full vector never recurs.
\end{proof}

\begin{theorem}[Counter client]\label{thm:mincas-counter}
One resettable min-CAS register and two CAS registers suffice for a
deterministic wait-free linearizable long-lived FAI counter.
The call with token $X$ takes $O(\rho(X)+1)$ abstract shared-memory steps.
\end{theorem}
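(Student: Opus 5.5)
The plan is to show that Algorithm~\ref{alg:mincas-counter}, run with \textsc{Assist} and \textsc{HelpOwner} from Algorithm~\ref{alg:seniorlock}, is an ordered service instance in the sense of Definition~\ref{def:ordered-service} with the static keys $\rho(p,k)$. Corollary~\ref{lem:finite-key} then gives the step bound, and linearizability follows from the generation order of tenures.

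\textbf{Instantiation.} Each token $X=(\rho(p,k),p,k)$ has a unique immutable key, fixed at the first line of \textsc{FAI} before any publication. Take $\mathsf{Done}(X)$ to be the predicate $\mathit{Reply}[p].\mathit{seq}\ge k$.

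\emph{Monotonicity.} Every successful reply CAS only raises one sequence field, from $k-1$ to $k$, so $\mathsf{Done}$ never reverts.

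\emph{Done only through service.} Only $\mathsf{Complete}(X,g)$ writes field $p$ to $k$, and it is called only from \textsc{HelpOwner} after revalidating tenure $(X,g)$. So $\mathsf{Done}(X)$ first becomes true only through service of owner $X$.

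\emph{Bounded service.} Lemma~\ref{lem:counter-complete} gives the $T=O(1)$ service bound.

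\emph{Pending requesters republish.} Every pending requester repeatedly calls $\Cand.\mathsf{MinUpdate}(X)$ before \textsc{Assist}. The inductive invariant $\mathit{Reply}[p].\mathit{seq}=k-1$ before invocation $k$ ensures a fresh token is not already done.

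With these checks, Theorem~\ref{thm:finite-charge} and Corollary~\ref{lem:finite-key} bound each call by $O(|L_X|+1)\le O(\rho(X)+1)$ management steps. Each $\mathsf{Done}$ test and the final read add $O(1)$ steps, so the bound holds for the whole call.

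\textbf{Linearizability.} Linearize each FAI at the successful reply CAS that sets its sequence field. I will argue that the result recorded there is the number of operations linearized before it, as follows.

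\emph{Results count prior completions.} The value $g$ written is the tenure's generation. Generations start at $0$ and increase by exactly one per released tenure (Lemma~\ref{lem:generation}).

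\emph{Tenures and completions are in bijection.} Lemma~\ref{lem:one-tenure} gives at most one tenure per token. Lemma~\ref{lem:release} shows every released tenure's token was done before release. The generation handshake guarantees that no tenure installs an already-done token. Hence tenure $g$ is exactly the $(g+1)$-st operation completed.

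\emph{Results are distinct and ordered.} Successive completions receive the results $0,1,2,\dots$ in linearization order, so the responses agree with a sequential counter.

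\emph{Real time is respected.} The linearization point lies inside the call, because the token is published after invocation and the requester returns only after observing $\mathsf{Done}$. Distinct processes' fields never interfere, because each CAS replaces the whole vector and vector states never recur.

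\textbf{Main obstacle.} The hardest step is showing that the $\mathsf{Complete}$ CAS writes $g$ exactly once per tenure. A stale helper of an earlier tenure could otherwise write an old generation into a field for a different token. I will argue as follows:
\begin{itemize}
\item A helper of tenure $(X,g)$ only ever writes field $p$ to $(k,g)$.
\item Once field $p$ reaches $k$, every later CAS prepared from a vector with $\mathit{seq}<k$ fails, because the vector has changed and never recurs.
\item The vector read inside $\mathsf{Complete}$ already reflects every earlier tenure's completion, since release follows completion.
\end{itemize}
Together these should settle it.

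Finally, the object count is direct. The construction uses one resettable min-CAS register ($\Cand$) and two CAS registers ($\Owner$ and $\mathit{Reply}$), with no FAI object.
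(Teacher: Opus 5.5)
Your proposal is correct and follows essentially the paper's route. You verify that the counter is an ordered service instance with static Cantor keys and $O(1)$ service (Lemma~\ref{lem:counter-complete}), apply Corollary~\ref{lem:finite-key} with $|L_X|\le\rho(X)$, and linearize at the successful reply CAS, using generation uniqueness, release after completion, and the done/generation handshake so that tenure $g$ completes exactly one operation with result $g$; your explicit tenure--completion bijection and linearization-point-inside-the-call argument just spell out what the paper compresses into ``successive tenures return $0,1,2,\ldots$ in real-time-respecting order.''
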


\begin{proof}[Proof sketch]
The successful reply CAS for $(p,k)$ linearizes the operation and returns its
owner generation.
The sequencing invariant prevents a later invocation from making $(p,k)$
appear done or overwriting its result before response.
Successive tenures therefore return $0,1,2,\ldots$ in real-time-respecting
order.
Completion is constant-time by Lemma~\ref{lem:counter-complete}; locally
finite keys and Corollary~\ref{lem:finite-key} give wait-freedom.
\end{proof}

\subsection{The separation}

Fich--Hendler--Shavit Theorem~3.6 states that an $N$-process wait-free
implementation of a long-lived object in $\mathrm{Visible}(N)$ from
read-write-conditional registers uses at least $\lceil N/2\rceil$ registers
\cite{FichHendlerShavit2004}.
Their visible-write definition explicitly includes long-lived counters.

\begin{theorem}[Conditional-RMW space separation]
\label{thm:mincas-separation}
Let $s(N)$ be the number of shared base locations used by a wait-free
linearizable long-lived black-box resettable min-CAS implementation for $N$
processes using only read, write, and conditional RMW registers.
Count read/write-only locations as conditional registers with unused
operations.
Then
\[
  \begin{aligned}
  s(N)+2&\ge\left\lceil\frac N2\right\rceil,\\
  s(N)&\ge\left\lceil\frac N2\right\rceil-2=\Omega(N).
  \end{aligned}
\]
The result holds even when every register stores an unbounded value.
\end{theorem}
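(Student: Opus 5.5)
The plan is a reduction to the Fich--Hendler--Shavit lower bound. Suppose we have a wait-free linearizable long-lived black-box implementation $I$ of a resettable min-CAS register for $N$ processes that uses $s(N)$ base locations, each a read, write, or conditional-RMW register; read/write-only locations count as conditional registers whose conditional operation is never used. In the counter client of Algorithm~\ref{alg:mincas-counter}, we replace the abstract register $\Cand$ by $I$ and keep $\Owner$ and $\mathit{Reply}$ as two ordinary CAS registers. The result is an $N$-process long-lived FAI counter that uses only read-write-conditional registers, $s(N)+2$ of them in total, and all of them may hold unbounded values.

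First we check that the composed object is still a wait-free linearizable long-lived counter. Since $I$ is a standalone linearizable object with internal registers disjoint from $\Owner$ and $\mathit{Reply}$, locality of linearizability \cite{HerlihyWing1990} lets us treat every $I$-operation as atomic at its linearization point. Every execution of the composed system then maps to an execution of the abstract client of Theorem~\ref{thm:mincas-counter}, with the same responses, so the counter is linearizable.

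For wait-freedom, Theorem~\ref{thm:mincas-counter} shows that the FAI with token $X$ makes $O(\rho(X)+1)$ abstract calls. The bound follows from Corollary~\ref{lem:finite-key}, because the charge set $L_X$ is finite under every schedule. Each abstract call to $I$ returns within a finite number of the caller's own steps because $I$ is wait-free, and the other two registers are single base steps. So every FAI finishes in finitely many steps of its invoking process. This is the point to state carefully. Fich--Hendler--Shavit need only wait-freedom, not a uniform bound, and the composition of finitely many wait-free calls is wait-free even though $I$'s step count may depend on $N$ or on the history.

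Finally we invoke Fich--Hendler--Shavit Theorem~3.6 \cite{FichHendlerShavit2004}. Long-lived counters lie in $\mathrm{Visible}(N)$, so any such implementation uses at least $\lceil N/2\rceil$ registers. This gives $s(N)+2\ge\lceil N/2\rceil$, and rearranging gives $s(N)\ge\lceil N/2\rceil-2=\Omega(N)$. Their model allows unbounded register values, so the bound holds in that setting too.

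I expect the main obstacle to be the model match rather than the arithmetic. We must argue that the two CAS registers with unbounded vector values, including the tagged pair in $\Owner$, are legitimate read-write-conditional registers. We must also confirm that the counter's processes use only read-write-conditional primitives; in particular the client contains no FAI object, as noted after Algorithm~\ref{alg:mincas-counter}. And we must confirm that ``long-lived FAI counter'' matches their visible-object hypothesis exactly, with process identifiers never reused and one outstanding operation per process.
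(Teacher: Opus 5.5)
Your proposal is correct and follows the paper's proof. Substituting the assumed implementation for $\Cand$ in the counter client of Algorithm~\ref{alg:mincas-counter} gives an $N$-process wait-free long-lived counter, with no added helper processes, built from $s(N)+2$ read-write-conditional registers, and Fich--Hendler--Shavit Theorem~3.6 then yields $s(N)+2\ge\lceil N/2\rceil$. Your explicit checks, that linearizability survives the black-box substitution and that finitely many wait-free inner calls keep each FAI wait-free, are exactly the details the paper leaves implicit.
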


\begin{proof}
Substitution in Algorithm~\ref{alg:mincas-counter} gives a wait-free
long-lived counter using $s(N)+2$ registers.
It uses the same $N$ client processes and introduces no helper process.
The visible-object lower bound of Fich, Hendler, and Shavit requires at least
$\lceil N/2\rceil$ read-write-conditional registers even with unbounded
values \cite{FichHendlerShavit2004}.
\end{proof}

\subsection{A matching CAS upper bound}

For processes $0,\ldots,N-1$, each with at most one outstanding operation, use
$N$ single-writer registers $\mathit{Ann}[p]$, initially $\bot$, and one CAS
register
\[
  \mathit{State}=(C,\mathit{Reply}[0..N-1]),
\]
initially $(\INF,[(0,\bot)]^N)$.
All registers hold unbounded values.
Process $p$'s $k$th operation has key
\[
  \rho_N(p,k)=N(k-1)+p.
\]
For this subsection, write
$L_X^N=\{Y:\rho_N(Y)<\rho_N(X)\}$.

Process $p$ forms $X=(p,k,\rho_N(p,k),op)$.
It writes the immutable token $X$ to $\mathit{Ann}[p]$.
Each help attempt collects all announcements, reads $\mathit{State}$, selects
the minimum-key entry one sequence beyond its recorded reply, applies its
transition to $C$, and attempts one state CAS.
The caller repeats until its reply appears; pseudocode and proof appear in
Appendix~\ref{app:separation}.

\begin{theorem}[Linear CAS upper bound]\label{thm:mincas-upper}
The construction is a wait-free linearizable long-lived resettable min-CAS
from $N$ read-write registers and one CAS register; operation $X$ performs
\[
  O\bigl(N(|L_X^N|+N+1)\bigr)
\]
shared-memory steps.
\end{theorem}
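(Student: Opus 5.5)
We prove Theorem~\ref{thm:mincas-upper} in three parts: safety via a single linearization point, per-attempt cost, and the number of attempts.

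\textbf{Safety.} Every change to the abstract register happens through a successful CAS on $\mathit{State}$. We linearize each operation $X$ at the successful state CAS that writes $\mathit{Reply}[p]$ to sequence $k$. The invariant is: $\mathit{State}.C$ equals the result of applying, in CAS order, the sequential transitions of all operations whose reply sequence has advanced, and each $\mathit{Reply}[p]$ records the response of $p$'s last applied operation.

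A helper selects only entries $X=(p,k,\ldots)$ with $\mathit{State}.\mathit{Reply}[p].\mathit{seq}=k-1$. It computes the new $C$ and reply from the state it read, and the CAS succeeds only if that state is unchanged. Consequently each operation is applied at most once. The state value never recurs, because every success strictly increases one sequence field, so there is no ABA and a stale helper can never reapply $X$.

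The linearization point lies inside $X$'s interval. It follows the announcement, since helpers learn $X$ only from $\mathit{Ann}[p]$. It precedes the response, since $p$ returns only after it reads its reply. A single-writer announcement of $(p,k)$ is never overwritten before $p$ responds, because $p$ has at most one outstanding operation. Real-time order holds because the linearization point lies within the operation.

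\textbf{Per-attempt cost.} One help attempt costs $N$ reads of $\mathit{Ann}$, one read of $\mathit{State}$, and one CAS, so $O(N)$ steps. It therefore suffices to show that $X$ runs $O(|L_X^N|+N+1)$ attempts.

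\textbf{Counting attempts.} Every failed CAS by $X$'s process is caused by a successful CAS from some other process in the interval between $X$'s state read and its CAS, and each such success applies exactly one operation. So I charge each failed attempt of $X$ to a distinct applied operation $Y$: the first success after the read, which is distinct across attempts since these intervals are disjoint.

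I then classify $Y$, using the fact that $X$ is announced before $X$'s first attempt.
\begin{itemize}
\item If $\rho_N(Y)<\rho_N(X)$, then $Y\in L_X^N$, which contributes $|L_X^N|$ charges.
\item If $\rho_N(Y)>\rho_N(X)$, the helper that applied $Y$ selected $Y$ over $X$. This can happen only because that helper's collect missed $\mathit{Ann}[p]=X$, or because its state read predated $X$'s eligibility. Both mean the helper's collect began before $X$ was announced.
\end{itemize}
Each of the $N$ processes has at most one attempt whose collect straddles $X$'s announcement and whose CAS is still pending. Any later attempt of that process collects $X$, and $X$ is eligible until applied, because $\mathit{Reply}[p].\mathit{seq}=k-1$ holds from announcement until $X$ is applied. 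So a later attempt selects a key $\le\rho_N(X)$, which either is $X$ or lies in $L_X^N$. This bounds the larger-key charges by $N$.

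The main obstacle is the second bullet: ruling out that a larger-key success arises from a helper that saw $X$ but raced. The fact that a helper selects the minimum eligible key among those it collected closes this. The remaining care is the initial phase, since the first collect of $X$'s own process already sees $X$.

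Summing gives at most $|L_X^N|+N+1$ attempts. Multiplying by $O(N)$ steps per attempt yields the stated $O\bigl(N(|L_X^N|+N+1)\bigr)$ bound.
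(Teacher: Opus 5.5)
Your proof follows the paper's argument essentially step for step: linearize at the successful $\mathit{State}$ CAS, and use monotone reply sequences to exclude duplicate application and ABA. It then pairs each failed victim CAS with a distinct success and bounds those successes by $|L_X^N|$ via minimum-key selection, plus at most one $X$-blind in-flight collect per other process, at $N+O(1)$ steps per iteration.

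One bookkeeping point: your final count tallies only failed attempts plus one, but the victim's own CAS can succeed while applying some smaller-key $Y\in L_X^N$ instead of $X$. Charge each such attempt to the operation it applies, which is distinct from all other charged operations because each is applied once. Likewise absorb the $O(1)$ attempts around $X$'s own application, where the charged success may be $X$ itself or may follow it. With these additions the $O(|L_X^N|+N+1)$ iteration bound is unchanged.
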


\begin{proof}[Proof sketch]
Each successful $\mathit{State}$ CAS applies one eligible announcement,
advances one reply sequence, prevents duplicate application and logical ABA,
and is the linearization point.
After $X$ publishes, at most $N-1$ already-running collects omit it; every
fresh collect that finds $X$ pending selects $X$ or an operation in $L_X^N$.
Thus distinct successful CAS operations cause at most $|L_X^N|+N$ retries,
each costing $N+O(1)$ accesses, before one final read observes completion.
\end{proof}

\begin{corollary}[Asymptotically tight fixed-population classification]
\label{cor:mincas-tight}
Let $s_{\min}(N)$ be the minimum register count for a wait-free linearizable
long-lived resettable min-CAS from read-write-conditional registers with
unbounded values.
Then
\[
  \left\lceil\frac N2\right\rceil-2
  \ \le\ s_{\min}(N)\ \le\ N+1,
  \qquad
  s_{\min}(N)=\Theta(N).
\]
\end{corollary}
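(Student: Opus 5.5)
The corollary combines Theorem~\ref{thm:mincas-separation} for the lower bound with Theorem~\ref{thm:mincas-upper} for the upper bound. The only genuine work is checking that the two theorems speak about the same quantity $s_{\min}(N)$: the same class of base objects, the same correctness notion, and the same specification of resettable min-CAS.

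\textbf{Lower bound.} Take any wait-free linearizable long-lived resettable min-CAS implementation for $N$ processes that uses $s$ read-write-conditional registers with unbounded values. Such an implementation is a black-box implementation in the sense of Section~\ref{sec:separation}: its internal registers are disjoint from the client's $\Owner$ and $\mathit{Reply}$. Read/write-only locations are counted as conditional registers, as that theorem allows. Theorem~\ref{thm:mincas-separation} then gives $s\ge\lceil N/2\rceil-2$. Minimizing over all such implementations yields $s_{\min}(N)\ge\lceil N/2\rceil-2$.

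\textbf{Upper bound.} The construction of Theorem~\ref{thm:mincas-upper} uses $N$ read-write registers $\mathit{Ann}[p]$ and one CAS register $\mathit{State}$, all with unbounded values. Read-write registers count as read-write-conditional registers, and CAS is the canonical conditional RMW, so the construction uses $N+1$ registers of the admitted class. The theorem gives a finite per-operation bound $O(N(|L_X^N|+N+1))$. Here $|L_X^N|\le\rho_N(X)$ is finite, because $\rho_N$ injects identities into the naturals. The construction is therefore wait-free, and it is linearizable and long-lived, so $s_{\min}(N)\le N+1$.

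\textbf{Obstacle.} The main point needing care is that the specification in the upper bound matches the one assumed in the lower-bound reduction. Algorithm~\ref{alg:mincas-counter} needs $\mathsf{Read}$, $\mathsf{MinUpdate}$, and exact-token $\mathsf{CAS}(x,\INF)$ on arbitrary keyed tokens with ticket order. I would check that the appendix construction supports all three operations with exactly these semantics and with tokens supplied by clients. If it implements only a restricted token domain, I would note that the counter client uses only tokens $(\rho(p,k),p,k)$, and re-encode them as needed. Combining the two inequalities, both bounds are linear in $N$, so $s_{\min}(N)=\Theta(N)$.
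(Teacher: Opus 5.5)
Your proposal is correct and matches the paper's proof. The paper likewise combines Theorem~\ref{thm:mincas-separation} for the lower bound with Theorem~\ref{thm:mincas-upper} for the upper bound, noting that a read-write register is a permitted special case of a read-write-conditional register. The specification-matching concern you raise is already settled by the statement of Theorem~\ref{thm:mincas-upper}: its construction stores $C$ by value and applies a generic \textsc{Apply} step, so it supports every resettable min-CAS operation on arbitrary tokens.
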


\begin{proof}
Combine Theorems~\ref{thm:mincas-separation} and~\ref{thm:mincas-upper};
a read-write register is a permitted special case of a
read-write-conditional register.
\end{proof}

In the fixed-population black-box model with unbounded values, resettable
min-CAS has $\Theta(N)$ base-location complexity over read-write-conditional
registers.
This metric counts base locations rather than bits; CAS remains universal.

\subsection{Infinite arrivals and GCAS}

The same client yields a consequence in the infinite-arrival model
\cite{MerrittTaubenfeld2013,PerrinMostefaouiBonin2020}.
Let process identifiers range over $\mathbb{N}$, retain
$\rho(p,k)=\pi(p,k)$, and encode $\mathit{Reply}$ as a finite sparse map in
one flat, by-value unbounded CAS register.
Map sequence entries increase monotonically and are never removed; tokens are
the values $(\rho,p,k)$, not references to retained descriptors.
Every finite execution stores a finite map, while the number of reachable
locations remains three at the abstract level.
Following B\'edin et al., quiescent complexity after $n$ operations is the
maximum number of locations reachable when exactly those operations have
completed and none is pending \cite{BedinEtAl2021}.
It counts locations, not the bit length of one location's value.

\begin{corollary}[Infinite-arrival separation]
\label{cor:mincas-infinite}
In the infinite-arrival model, no wait-free linearizable implementation of a
resettable min-CAS register from read, write, and CAS has constant quiescent
complexity.
\end{corollary}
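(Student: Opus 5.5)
We argue by contradiction, reducing to an infinite-arrival space lower bound for long-lived counters. Suppose some wait-free linearizable implementation $I$ of resettable min-CAS from read, write, and CAS had quiescent complexity bounded by a constant $c$. We substitute $I$ for the abstract register $\Cand$ in the counter client of Algorithm~\ref{alg:mincas-counter}. For the infinite-arrival setting we use the static Cantor keys $\rho(p,k)=\pi(p,k)$ and the flat sparse-map encoding of $\mathit{Reply}$ described above. The result is a long-lived FAI counter whose reachable locations are those of $I$ plus the two CAS registers $\Owner$ and $\mathit{Reply}$, so its quiescent complexity is at most $c+2$.

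The first step is to check that the counter analysis survives both the substitution and infinitely many process identifiers. Theorem~\ref{thm:mincas-counter} and Corollary~\ref{lem:finite-key} rely only on:
\begin{itemize}
\item linearizability and wait-freedom of each $\Cand$ operation;
\item finiteness of each charge set $L_X$, which holds because $\pi$ is an injection into $\mathbb N$, so $|L_X|\le\rho(X)$ even with unboundedly many identifiers;
\item Lemma~\ref{lem:counter-complete}, whose argument uses only monotone sequence entries of the map, never removed, so the full map never recurs and ABA is excluded.
\end{itemize}
Each call to $I$ is wait-free and hence finite, so every counter operation $X$ still finishes within a finite number of steps, namely $O(\rho(X)+1)$ calls to $I$, each finite. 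The composed object is therefore a wait-free linearizable long-lived counter from read, write, and CAS with constant quiescent complexity.

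The second step invokes the known infinite-arrival lower bound. I expect B\'edin et al.\ \cite{BedinEtAl2021}, in the line of Merritt--Taubenfeld and Perrin et al., to show that a wait-free long-lived counter (or any object whose writes are visible to later operations, such as fetch-and-increment) implemented from read, write, and CAS in the infinite-arrival model cannot have constant, or even bounded, quiescent complexity. Applying that bound to the composed counter contradicts the $c+2$ bound, so no such $I$ exists.

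The main obstacle is matching the cited lower bound's hypotheses exactly. Three points need checking:
\begin{itemize}
\item \emph{Counting locations.} Quiescent complexity counts reachable locations and ignores value size. Our sparse map lives in one unbounded location, so I must confirm that the cited result counts locations, not bits, and permits unbounded values. If it instead rested on a bit count, the single-location encoding would defeat it.
\item \emph{Quiescence of the composition.} I must verify that the composition's quiescent configurations induce quiescent configurations of $I$. This holds because the counter calls $\Cand$ only inside pending FAIs, so after $n$ completed FAIs with none pending, $I$ is also quiescent, having completed some finite number of operations. Its reachable-location count is then at most $c$.
\item \emph{No retained descriptors.} Tokens are by-value triples $(\rho,p,k)$, so the composition introduces no additional reachable locations.
\end{itemize}
If the cited theorem is stated only for, say, snapshot or consensus-like objects, I would first try to derive the counter case from its visible-write argument, mirroring how Theorem~\ref{thm:mincas-separation} used Fich--Hendler--Shavit's $\mathrm{Visible}(N)$ class.
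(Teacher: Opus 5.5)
Your proposal is correct and follows the paper's proof. You substitute the assumed constant-quiescent-complexity implementation for $\Cand$ in the sparse-map counter client with Cantor keys, and note that outer quiescence forces every inner call to have returned, so the composed read/write/CAS counter has at most $c+2$ reachable locations. This contradicts the infinite-arrival counter lower bound of B\'edin et al. The hypotheses you flag for checking (counting locations rather than bits, by-value tokens with no retained descriptors, monotone never-removed map entries) are exactly the conditions the paper builds into its sparse-map setup before stating the corollary.
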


\begin{proof}
Substitution into the sparse-map client gives a read/write/CAS counter with
constant quiescent complexity: at outer quiescence every inner call has
returned, the assumed implementation contributes a constant number of
reachable locations, and $\Owner,\mathit{Reply}$ contribute two.
This contradicts B\'edin et al.\
\cite{BedinEtAl2021}.
\end{proof}

Resettable min-CAS is a restricted generalized compare-and-swap (GCAS):
\[
  \begin{aligned}
  \mathsf{MinUpdate}(x)&=\mathsf{GCAS}(>,x,x),\\
  \mathsf{CAS}(a,b)&=\mathsf{GCAS}(=,a,b).
  \end{aligned}
\]
Prior GCAS work uses FAI, repeated minimum-ticket announcement, and equality
replacement to obtain an analogous $O(\beta_o+1)$ bound for atomic state
transitions
\cite[Appendix~B.2, Theorem~B.2.15]{HadzilacosThiessenToueg2026}.
Its state CAS invalidates stale snapshots; \alg uses an owner generation to
preserve the bound across a multi-step thunk.
For fixed $N$, our result characterizes unbounded-register count.
Adaptive infinite-population space and a CAS-derived \alg time bound remain
open.

\section{Comparison and related work}\label{sec:related}

Generalized compare-and-swap provides the analogous fixed-senior bound for
pure state transitions.
We formulate this bound as an object-level property and extend it to bounded,
concurrently idempotent multi-step thunks; Section~\ref{sec:separation}
compares the primitive requirements.
The cited adaptive universal constructions give point-contention bounds that
may include operations arriving after key assignment
\cite{AfekDauberTouitou1995,FatourouKallimanis2011}, and therefore do not
imply Definition~\ref{def:seniority}.
Large-object constructions instead address state-copying costs
\cite{AndersonMoir1999}.
Our separation uses the lower bounds of Fich, Hendler, and Shavit and of
B\'edin et al.\ \cite{FichHendlerShavit2004,BedinEtAl2021}.
Unlike monotone registers and general reset transformations, the candidate
combines minimum update with exact reset
\cite{AspnesAttiyaCensorHillel2012,AghazadehGolabWoelfel2013}.
\alg adds a ticket guard and versioned owner to descriptor helping and
idempotence \cite{TurekShashaPrakash1992,Barnes1993,BenDavidBlellochWei2022}.
Randomized wait-free locks give probabilistic contention-dependent bounds
\cite{BenDavidBlelloch2022,AeiniEtAl2026}; \alg gives a deterministic
fixed-senior bound for one always-executed thunk.

\section{Conclusion}\label{sec:conclusion}

\alg is a deterministic helpable thunk lock whose worst-case
invoking-process bound is independent of calls invoked after its ticket step.
A restored ticket guard and versioned owner give
$O(\beta_D+1)$ management steps under every asynchronous schedule.
We formulate this guarantee as retrospective wait-freedom, a property defined
for arbitrary concurrent objects.
For deterministic sequential objects satisfying the uniform $T$-step thunk
contract, one \alg instance gives an effectful retrospective universal
construction with $O((\beta_D+1)(T+1))$ local shared-memory steps per
operation.
In the fixed-population unbounded-value black-box model, resettable min-CAS
has tight $\Theta(N)$ read-write-conditional-register complexity.
FAI supplies the increasing tickets used to identify seniors, and the
infinite-arrival result rules out constant read/write/CAS quiescent location
complexity.

\section*{Acknowledgments}

Generative AI tools assisted only with language editing and formatting.
The authors reviewed all output and take full responsibility for this paper.

This appendix uses the notation, definitions, and algorithm numbers introduced
above and gives the full proof details that Sections 2--9 abbreviate.

\appendix

\section{Safety details}\label{app:safety}

\subsection{Critical interleavings}

\paragraph{Stale candidate ownership.}
Let a helper read $\Cand=x$, then let a descriptor $y$ with a smaller key
replace $x$ before the owner CAS.
The helper may still install $x$ in the same empty generation.
Its cleanup compares the exact token $x$ and therefore cannot remove $y$;
after $x$ completes and releases, $y$ remains the next obligation.

\paragraph{Done-test/generation handshake.}
The installation path orders its steps as an empty-owner read in generation
$g$, a false done read, and then owner CAS.
If $x$ becomes done between the latter two steps, some tenure of $x$ must
start and release before the owner is empty again, advancing the generation.
The delayed CAS expecting $g$ fails.
Moving the done read before the owner read would instead allow installation
of a completed descriptor in the new empty generation.

\paragraph{Physical late run.}
A helper can validate $(x,g)$ and pause while another helper completes $x$,
releases $g$, and installs a later owner.
No finite reread sequence closes the final instruction window before
$x.\mathsf{run}()$.
The observational concurrent-idempotence contract makes every such
post-completion physical step removable under every continuation.

\paragraph{Reset before release and stale self-publication.}
After $x$ is done, cleanup may reset $\Cand=x$ and pause before owner release.
A new candidate token different from $x$ survives because all later cleanup
compares against $x$, but it cannot install until the occupied owner is
released; this is the unguarded tail.
Separately, $x$'s requester can have one loop test already read as active
when another helper finishes $x$.
It may publish one stale done token, then observe done at the next loop test;
it cannot publish a second.

\subsection{Safety invariants}

\paragraph{Generation uniqueness.}
All installations CAS the complete owner pair.
Only one CAS can change $(\nullowner,g)$, and the only transition back to an
empty owner writes $(\nullowner,g+1)$.
Thus at most one descriptor is installed in generation $g$, and every
delayed installation expecting $(\nullowner,g)$ fails after release.

\paragraph{Release after completion.}
The only release is the final CAS in \textsc{HelpOwner}.
It is executed only after observing the monotone done predicate, so tenure
$(x,g)$ cannot end before $x$ is done.

\paragraph{Candidate responsibility.}
The cleanup CAS in \textsc{Assist} is reached only after reading $x$ done,
and the reset in \textsc{HelpOwner} is likewise after service and a done
test.
Hence neither clears an active $x$.
Minimum update only decreases the stored key, and an exact-token reset of a
stale larger-key candidate cannot erase a smaller-key replacement.

\paragraph{One tenure per descriptor.}
Suppose $x$ is installed in generation $g$.
Its tenure ends only after $x$ is done and advances the generation.
An installation prepared before release expects a generation at most $g$ and
fails afterward.
An installation prepared afterward reads the owner before testing done and
therefore observes the monotone true predicate instead of attempting CAS.

\begin{proof}[Detailed proof of safety]
Every run starts after validation of an owner tenure.
If that tenure ends before the helper's next thunk step, release after
completion implies that the descriptor is already done; concurrent
idempotence makes the resumed run noneffectual.
Thus every effectual thunk step belongs to the descriptor whose tenure was
validated.
A distinct tenure cannot begin until the previous descriptor has a finished
run.
Concurrent idempotence gives each descriptor one logical execution interval
ending at its first finished run and makes redundant or later runs
noneffectual, so distinct logical intervals do not overlap.

A descriptor is not published before its lock invocation and its caller
returns only after observing done.
Its logical execution interval therefore lies within its call.
Order completed calls by owner generation.
If one call returns before another is invoked, the former logical interval
has ended; the latter cannot begin effectual work before a later tenure.
Generation order consequently respects real-time precedence and yields a
legal serial thunk execution.
\end{proof}

\section{Finite charge-set progress details}\label{app:progress}

Fix an execution $\alpha$, victim $D$, key $d=\rho(D)$, and a charge set
$\mathcal S_D(\alpha)$.
All events below occur in $D$'s pending suffix (which is infinite if $D$
never finishes), and $m=|\mathcal S_D(\alpha)|$.

\paragraph{Guard establishment.}
Minimum update leaves $D$ or a smaller-key descriptor in $\Cand$.
Larger-key updates preserve that value, and candidate CAS operations only
replace an exact token by $\INF$.

\paragraph{Smaller-key chain compression.}
Every effective smaller-key publication strictly decreases the candidate key.
The charge-set definition assigns its descriptor to $\mathcal S_D$, so a
first-time strict chain has length at most $m$.
A repeated publication is already present or dominated by a smaller-key
token and does not extend the chain.
No point in the chain exposes a larger-key token, so the chain does not
create another stale installation cohort.

\paragraph{Opening the guard.}
The guard can be destroyed only by a successful
$\Cand.\mathsf{CAS}(x,\INF)$ after $x$ is done.
If $x\ne D$, guardedness gives $\rho(x)<d$.
Except for the done token already resident when $D$ receives its key, $x$
was initially resident and active or was published later, so the charge-set
definition places it in $\mathcal S_D$.
After completion, one successful cleanup CAS in \textsc{Assist} or
\textsc{HelpOwner} removes $x$.
Its requester can have at most one loop test already read as active and can
therefore create at most one stale post-completion publication.
Thus each charged descriptor opens the guard at most twice.

\paragraph{Done-candidate episodes.}
If $x$ is in $\Cand$ when it becomes done, that event begins the first
episode.
After removal, only the one stale requester iteration just described can
reinsert it.
Every failed cleanup CAS following a read of done $x$ is charged to one of
these two maximal episodes.

\paragraph{One larger-key bypass per guard.}
No larger-key descriptor becomes candidate while the guard holds.
Every such installation was prepared from a candidate read before the
guarded interval.
If a larger-key owner is already installed, it is the one bypass.
Otherwise all prepared installations that can still succeed expect the same
empty generation $g$.
Generation uniqueness permits one success; release advances to $g+1$, and
the guard prevents preparation of another larger-key installation.
The same argument applies after a smaller-key owner present at interval start:
its reset either ends the interval or leaves a guarded candidate.

\paragraph{One larger-key tail bypass.}
Consider $D$'s program counter when the opening occurs.
If $D$ is between iterations, its next arbitration access is minimum update,
so the tail exposes nothing.
If the opening occurs after $D$'s minimum update but before its owner read,
the remaining iteration follows at most the one owner path selected by that
read.
If it previously read an occupied owner, only that one
\textsc{HelpOwner} suffix remains; success services that tenure and failed
revalidation returns immediately.
If it read an empty owner in generation $g$, the remaining \textsc{Assist}
suffix reads one candidate.
It either returns on $\INF$, performs one cleanup after a done read, or makes
one owner CAS after an active read and follows at most that installed owner.
These branches are exclusive.
If $D$ itself opened the guard in cleanup, both \textsc{Assist}'s
done branch and \textsc{HelpOwner}'s release suffix return without another
candidate/owner read.
Thus the suffix exposes exactly zero or one larger-key obligation.
Arbitrarily many larger-key calls may finish while $D$ is paused; only the
single owner or done-candidate state observed when $D$ resumes matters.
If tenure $H$ first invalidates an empty pair read in the tail, $H$ remains
the tail charge even when the CAS executes after a smaller-key publication
starts the next guard; a different owner $B$ then current at guard start is
that guard's one bypass.

\paragraph{Larger-key bypass compression.}
Let $h$ be the number of openings before $D$ is done.
The opening argument gives $h\le2m+1$, including the initial done-candidate
residue.
There are at most $h+1$ guarded intervals and $h$ tails, each with one
larger-key bypass, for at most
\[
  (h+1)+h\le4m+3
\]
larger-key obligations visible to $D$.

\paragraph{Tenures and local service.}
Every descriptor in $\mathcal S_D$ owns at most one tenure, as does $D$.
Every smaller-key tenure beginning after key assignment belongs to
$\mathcal S_D$.
Adding $m$ smaller-key tenures, at most $4m+3$ larger-key bypasses, $D$, and
one initial owner yields at most $5m+5$ victim-visible charges.
If owner revalidation fails, that tenure has ended permanently.
If it succeeds, bounded service returns with done true and the helper
attempts both cleanup CAS operations before its next loop, so the requester
services a tenure at most once.

\paragraph{Terminal suffix.}
After $D$ becomes done, its sequential requester either observes done at its
next loop test or finishes the one iteration whose test was already read.
That suffix has constant management work and at most one service call.

\paragraph{Management-step charging.}
Partition every precompletion iteration by its \textsc{Assist} outcome.
If it reads an occupied owner, its revalidation either fails because the
tenure ended or succeeds and bounded service plus cleanup ends that tenure
before the requester can encounter it again.
If an owner-installation CAS fails after an empty read, charge the iteration
to the tenure that first changed that exact empty generation.
The same tenure can receive this failed-installation charge and at most one
subsequent occupied-owner charge, so tenure multiplicity is at most two.

After $D$ publishes, a read of $\INF$ requires an intervening exact reset and
is charged to that opening.
The requester performs only one candidate read in the iteration and
republishes at the next loop head, so an opening receives at most one such
charge.
If $D$ reads a done candidate, charge the iteration to that maximal
done-candidate episode.
Whether cleanup succeeds or fails because the candidate changes, that episode
ends before $D$ can issue another cleanup, so its multiplicity is one.
Smaller-key descriptors contribute at most $2m$ episodes and the initial
token one.
A larger-key done token read after an opening is the one tail obligation
already charged to that opening, even if its tenure completed entirely while
$D$ was paused.
These cases exhaust the branches after \textsc{Assist}'s owner read.
There are at most $2m+1$ openings and $5m+5$ owner/bypass charges, hence at
most $7m+6$ charge objects.
Each charge object receives only $O(1)$ iterations.
Together with the constant terminal suffix, this proves the management and
service bounds.

\paragraph{FAI senior cover.}
For \alg, set $\rho(x)=x.\mathit{ticket}$ and
$\sigma_D=\tau_D$.
Every unfinished smaller-ticket call at $\tau_D$ is a senior.
Increasing FAI prevents a later smaller ticket; a returned call cannot
publish again, and any delayed installation not already current at
$\tau_D$ expects an obsolete generation.
The one current owner tenure is the allowed initial residue.
Thus $\mathsf{Seniors}(D)$ is a charge set and has size
$\beta_D\le\kappa_D-1$.

\section{Effectful retrospective universality}\label{app:universal}

We give the full safety, completion, and progress argument for the
effectful retrospective universality theorem.
Let $\mathcal O$ be a deterministic sequential object with state space
$Q$, initial state $q_0$, and transition function
$\delta(q,op)=(q',r)$.
Its state has a private shared representation $R$, and every operation is
packaged as a fresh concurrently idempotent thunk satisfying the uniform,
helper-count-independent $T$-step service contract.
All target-object invocations use one \alg instance.

\begin{theorem}[Effectful retrospective universality, restated]
The wrapper
\[
  \mathsf{Invoke}_{\mathcal O}(op)
  =L.\mathsf{Lock}(D,f_{D,op})
\]
is linearizable and retrospective wait-free.
An invocation with seniority $\beta_D$ performs
$O(\beta_D+1)$ lock-management steps and
$O((\beta_D+1)(T+1))$ local shared-memory steps, independently of
operations invoked after its ticket step.
\end{theorem}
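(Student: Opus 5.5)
The plan has two parts: linearizability, which comes from the safety theorem plus concurrent idempotence, and the complexity bound, which comes from applying Theorem~\ref{thm:waitfree} verbatim.

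For linearizability, fix a finite execution $E$ of $\mathsf{Invoke}_{\mathcal O}$. By Theorem~\ref{thm:safety}, the logical execution intervals of distinct descriptors do not overlap, each lies within its own $\mathsf{Lock}$ call, and they are ordered by owner generation. I will use this generation order as the candidate linearization order. Next I apply Definition~\ref{def:idempotence} once per descriptor, in generation order. Each application replaces the physical runs $E|D$ by a subsequence consistent with a single run. It leaves all other thunk steps and responses unchanged, and it is indistinguishable under every continuation, so the replacements compose. The result is an execution in which $R$ is touched by exactly one sequential run of each $f_{D,op}$, in generation order, with no interleaving. An induction on generations then shows that before the $i$th logical interval $R$ represents $\delta^*(q_0,op_1\cdots op_{i-1})$. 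Since $f_{D,op}$ executes the sequential code of $op$, it produces $\delta(q,op)=(q',r)$, logs $r$ in $D.\mathit{result}$ before publishing done, and $\mathsf{Lock}$ returns exactly that logged value. Real-time order is respected because a call that returns before another is invoked has already ended its logical interval, and the later call cannot begin effectual work before a later tenure, as argued in the safety proof.

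Pending calls need care. A pending call whose descriptor is already done gets its logged response appended. Pending calls that are not done are omitted, unless some physical steps of theirs have touched $R$. That can happen only within the current unfinished tenure, and Lemma~\ref{lem:release} ensures no later tenure (hence no later logical interval) has begun. Such a call can therefore be placed last, completed with the response its sequential run will yield, or dropped if its effect is not yet observable. This completion step is the one I expect to be the main obstacle. An unfinished tenure may have partially mutated $R$, and I need indistinguishability under every continuation to justify treating that prefix as either a not-yet-linearized operation or the last one. I would first try dropping it: the partial steps are not observable by any completed response, since no later interval has started. If that fails, I would complete it by extending with the finished run.

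For progress, each target operation is a single call $L.\mathsf{Lock}(D,f_{D,op})$ with the same FAI ticket, the same $\tau_D$, and the same set $\mathsf{Seniors}(D)$. The thunk meets the uniform, helper-count-independent $T$-step contract, so Theorem~\ref{thm:waitfree} applies directly. It gives $O(\beta_D+1)$ management steps and $O((\beta_D+1)(T+1))$ local steps. The bound depends only on $\beta_D$, which is fixed at $\tau_D$, so it is independent of later invocations. This is exactly retrospective wait-freedom in the sense of Definition~\ref{def:seniority}, with key equal to the ticket and $\sigma_{op}=\tau_D$.
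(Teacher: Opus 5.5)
Your proposal is correct and follows essentially the same route as the paper. Safety gives nonoverlapping, real-time-respecting logical intervals; concurrent idempotence, together with the privacy of $R$, collapses each retained descriptor's physical runs into one sequential run, so induction along interval order yields a legal history with the logged responses; and progress is Theorem~\ref{thm:waitfree} applied with the same ticket, seniors, and $T$.

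Your hedged treatment of an unfinished pending call resolves exactly as in the paper. By Lemma~\ref{lem:release} its tenure can never end, so its partial execution is a suffix after every retained operation and you can simply drop the call; concurrent idempotence covers any continuation in which it later finishes.
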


\begin{proof}
Fix an arbitrary target-object history.
Complete every pending invocation whose descriptor is done by appending its
recorded response, and discard every other pending invocation.
Call the resulting completed invocations \emph{retained}.

By lock safety, each retained descriptor has a logical thunk execution
interval between its invocation and response.
Logical intervals of distinct descriptors do not overlap, so they are
totally ordered by time, and this order respects the real-time precedence of
nonoverlapping invocations.
For each interval, concurrent idempotence supplies a subsequence of the
physical thunk steps that is indistinguishable under every continuation from
one run of that thunk, while preserving all steps belonging to other
descriptors.
Because $R$ is private to the target implementation, replacing the physical
runs, interval by interval, therefore leaves an execution in which exactly
one copy of each retained sequential operation accesses the target state.

Induct on the logical-interval order.
The first retained thunk starts from the representation of $q_0$ and,
by its sequential specification, changes it to $q_1$ and records response
$r_1$, where
\[
  \delta(q_0,op_1)=(q_1,r_1).
\]
If the first $i-1$ intervals implement the legal prefix ending in $q_{i-1}$,
then no other logical target operation overlaps interval $i$.
The single effectual run in that interval consequently implements
\[
  \delta(q_{i-1},op_i)=(q_i,r_i)
\]
and records $r_i$ before publishing done.
Thus interval order yields a legal sequential history of $\mathcal O$ with
the responses returned by all retained invocations.
Since the order also respects real time, it is a linearization.

It remains to justify discarding the pending descriptors that are not done.
A descriptor that has not begun an owner tenure has executed no target-state
step.
If an unfinished descriptor has begun a tenure, release after completion
prevents that tenure from ending, so no later logical target interval can
begin.
Its partial physical execution is therefore a suffix after every retained
operation; omitting the pending invocation from the completed abstract
history does not change the legal sequential prefix above.
If the descriptor later finishes, concurrent idempotence makes its full
logical run one additional sequential transition, so the argument is also
closed under continuations.

For progress, a target-object invocation performs exactly the corresponding
$\mathsf{Lock}(D,f_{D,op})$ call.
The main retrospective wait-freedom theorem applies with the same descriptor
ticket, seniors, and $T$-step service contract, yielding
$O(\beta_D+1)$ management steps and
$O((\beta_D+1)(T+1))$ total local shared-memory steps.
Every target invocation after $D$'s ticket step obtains a larger FAI ticket,
so it cannot enter $D$'s senior set or affect either bound.
\end{proof}

\section{Ablation schedules}\label{app:ablation}

\begin{proof}[Proof of the Republication proposition]
Let a smaller-ticket call $E$ obtain a ticket and pause before publication.
Victim $D$ obtains the next ticket, publishes once, and pauses.
Publish $E$, let it become owner, and schedule $D$ to complete it.
The exact reset removes $E$ while $D$'s prior token is already gone.
For round $i$, start a larger-ticket call $Y_i$, publish and install it just
before $D$ reads the owner, and let $D$ complete and release it.
Return $Y_i$ before starting $Y_{i+1}$.
The variant never republishes $D$, so the schedule repeats forever with only
$D$ and one other call active.
\end{proof}

\begin{proof}[Proof of the Stale-cohort lower bound]
Use victim $D$, smaller-ticket calls $E_1,\ldots,E_q$, and larger-ticket
helpers
$H_1,\ldots,H_q$.
All smaller-ticket calls obtain tickets and pause before $D$ receives its
ticket.
In phase $i$, publish one remaining $E_i$, make it owner, and let $D$
complete it, opening the guard.
Let $h_i$ be the minimum remaining helper.
Run every remaining helper through an empty-owner read, a read of
$\Cand=h_i$, and a false done read, pausing each before owner CAS.
Then republish $D$.

Resume the paused CAS operations one by one.
The first installs $h_i$; $D$ completes and releases it.
With an untagged empty owner, every remaining stale CAS can reinstall the
same done descriptor and be cleared by $D$.
Phase $i$ therefore contributes $q-i+1$ tenures.
Summing gives
\[
  q+(q-1)+\cdots+1=\frac{q(q+1)}2.
\]
There are at most $2q+1$ active calls.
A generation permits only the first CAS in each phase and invalidates the
rest on release.
\end{proof}

\section{Primitive-separation details}\label{app:separation}

This section gives the omitted proofs and pseudocode for the
primitive-separation results.

\paragraph{Constant-time counter completion.}
While tenure $(X,g)$ is current, no distinct valid tenure begins.
A delayed helper from an earlier tenure expects a reply vector preceding the
successful update required before that tenure's release.
Reply sequence fields only increase, so that expected vector never returns.
While $X$ is unfinished, every successful reply CAS therefore completes
$X$.
If a helper's CAS fails, one reread observes done.
Each success strictly advances one process sequence from $k-1$ to $k$.

\begin{proof}[Proof of the Counter client theorem]
Processes have permanent identifiers and one outstanding call.
Inductively, $\mathit{Reply}[p].\mathit{seq}=k-1$ when call $k$ starts, and
call $k+1$ is not announced before call $k$ responds.
The first successful reply CAS writing $(k,g)$ for process $p$ linearizes
$X=(\rho(p,k),p,k)$.
Generation uniqueness assigns one token to each tenure and release follows
completion.
Starting at generation zero, consecutive tenures complete one operation and
return $0,1,2,\ldots$.
If a caller returns before release, no later operation completes until that
release advances the generation, preserving real time.
The sequencing invariant prevents a later call from making $(p,k)$ appear
done or replacing its result before response.
The done predicate is monotone, completion costs $O(1)$, and the locally
finite smaller-key set has size at most $\rho(X)$; the finite charge-set
theorem gives wait-freedom.
\end{proof}

\begin{proof}[Proof of the Conditional-RMW space-separation theorem]
Let $s(N)$ count the fixed set of all shared base locations, including
read/write-only locations.
Assume an $s(N)$-register wait-free linearizable long-lived resettable
min-CAS implementation using read/write/conditional-RMW registers.
Substitute it into the counter client and retain fresh owner and reply
registers.
Composition gives an $N$-process wait-free long-lived counter using
$s(N)+2$ read-write-conditional registers.
It introduces no helper process, so the process parameter remains $N$.
The Fich--Hendler--Shavit visible-object theorem requires
$\lceil N/2\rceil$ such registers, even with unbounded values.
Hence $s(N)\ge\lceil N/2\rceil-2$.
\end{proof}

\begin{algorithm}[t]
\caption{$N+1$-register resettable min-CAS upper bound.}
\begin{algorithmic}[1]
\Procedure{Invoke}{$op$}
  \State $k_p\gets k_p+1$;\quad
         $X\gets(p,k_p,N(k_p-1)+p,op)$
  \State $\mathit{Ann}[p].\mathsf{Write}(X)$
  \Loop
    \State $S\gets\mathit{State}.\mathsf{Read}()$
    \If{$S.\mathit{Reply}[p].\mathit{seq}\ge k_p$}
      \State \Return $S.\mathit{Reply}[p].\mathit{result}$
    \EndIf
    \For{$q\gets0$ \textbf{to} $N-1$}
      \State $A[q]\gets\mathit{Ann}[q].\mathsf{Read}()$
    \EndFor
    \State $S\gets\mathit{State}.\mathsf{Read}()$
    \State $\mathcal E\gets
      \{A[q]\ne\bot:A[q].\mathit{seq}=S.\mathit{Reply}[q].\mathit{seq}+1\}$
    \If{$\mathcal E\ne\varnothing$}
      \State $Y\gets\arg\min_{Z\in\mathcal E}\rho_N(Z)$
      \State $(C',v)\gets\Call{Apply}{S.C,Y.op}$
      \State $S'\gets S$ with $S'.C\gets C'$ and
             $S'.\mathit{Reply}[Y.p]\gets(Y.\mathit{seq},v)$
      \State $\mathit{State}.\mathsf{CAS}(S,S')$
    \EndIf
  \EndLoop
\EndProcedure
\end{algorithmic}
\end{algorithm}

\begin{proof}[Proof of the Linear CAS upper-bound theorem]
\leavevmode\par
The packed state is $(C,\mathit{Reply}[0..N-1])$.
Each successful state CAS applies one eligible operation to the preceding
$C$ and advances its owner's reply sequence by exactly one.
It is the operation's linearization point.
Sequence monotonicity prevents duplicate application and logical ABA of the
unbounded packed state even when $C$ cycles.
A stale announcement is filtered if complete at the following state read; if
it completes afterward, the state changes and the stale CAS fails.
Thus the successful CAS chain is a legal sequential history and every
completed operation linearizes between publication and response.

Fix pending $X$ and its announcement write $W_X$.
At $W_X$, each other sequential process has at most one help call that
already read $\mathit{Ann}[X.p]$.
Each makes one CAS attempt, so at most $N-1$ successful transitions after
$W_X$ can arise from $X$-blind collects.
Every new collect sees $X$ until completion.
If its state read still sees $X$ pending, minimum-key choice selects $X$ or
one of the finitely many operations in $L_X^N$.
Every successful CAS completes a distinct identity.
Each victim CAS failure is paired with a distinct successful transition
between its state read and CAS.
Therefore the requester performs
$O(|L_X^N|+N)$ iterations while pending, followed by one reply read, each
using $N+O(1)$ accesses.
Bounded sequence wraparound and pointer reclamation are outside this abstract
theorem.
\end{proof}

\paragraph{Infinite arrivals.}
With permanent identifiers in $\mathbb N$, encode replies as one flat
by-value sparse map in an unbounded register.
Entries are monotone and never removed; operation tokens contain
$(\rho,p,k)$ values rather than retained descriptor references.
A constant-quiescent-complexity read/write/CAS implementation of resettable
min-CAS would then compose with the two-register client state.
At outer quiescence all nested calls have returned, so reachable locations
are exactly the implementation's constant quiescent set plus owner and reply.
This contradicts the cited infinite-arrival counter impossibility.


\end{document}